\documentclass[runningheads]{llncs}
\usepackage{graphicx} 
\usepackage{tikz}
\usepackage{zed-csp}
\usepackage[T1]{fontenc}
\usepackage{hyperref}

\usepackage[colorinlistoftodos,prependcaption]{todonotes}
\usepackage{enumitem}
\usepackage{cancel,soul,ulem}
\usepackage[linesnumbered,ruled,vlined]{algorithm2e}
\def\tsc#1{\csdef{#1}{\textsc{\lowercase{#1}}\xspace}}
\tsc{WGM}
\tsc{QE}



\title{A Simple Trace Semantics for Asynchronous Sequence Diagrams}

\author{David Faitelson\inst{1} \and
Shmuel Tyszberowicz \inst{1,2}}
\authorrunning{Faitelson and Tyszberowicz}
%
\institute{Afeka Academic College of Engineering, Israel \\
\email{davidf@afeka.ac.il} 
\and
Centre for Research and Innovation in Software Engineering (RISE), 
School of Computer and Information Science, Southwest University,
Chongqing, China\\
\email{tyshbe@afeka.ac.il}}

\date{May 2024}

\begin{document}
\maketitle

\begin{abstract}
Sequence diagrams are a popular technique for describing interactions between software entities. However, because the OMG group's UML standard is not based on a rigorous mathematical structure, it is impossible to deduce a single interpretation for the notation's semantics, nor to understand precisely how its different fragments interact. While there are a lot of suggested semantics in the literature, they are too mathematically demanding for the majority of software engineers, and often incomplete, especially in dealing with the semantics of lifeline creation and deletion. In this work we describe a simple semantics based on the theory of regular languages, a mathematical theory that is a standard part of the curriculum in every computer science undergraduate degree and covers all the major compositional fragments, and the creation and deletion of lifelines. 
\end{abstract}



\section{Introduction}

Sequence diagrams are a popular technique for describing interactions between software entities. 
Each entity is modeled as a vertical line, and messages between entities are modeled as horizontal arrows. 
The OMG group's UML standard defines a rich syntax for sequence diagrams and describes their semantics informally in terms of traces~\cite{UML17}. 
However, because the standard's definitions are not based on any rigorous mathematical structure, it is impossible to deduce a single interpretation for the notation's semantics, nor to understand precisely how its different fragments interact~\cite{MicskeiW11,MouakherDA22}. 

Even though there is a lot of work on the semantics of sequence diagrams, most of it is quite mathematically sophisticated or requires knowledge and mastery of additional formal notations. Therefore we believe that there is still room for important contributions, in particular by providing an accessible semantics that could be used by practitioners to clarify the meaning of sequence diagrams. 

We have discovered that an important step towards simplifying the semantics is to separate sequence diagrams into two kinds, determined by how we interpret the meaning of lifelines. 
The first kind, synchronous diagrams, represents the interactions between passive objects whose methods are executed by an (implicit) set of threads. 
The other kind, asynchronous diagrams, represents the interaction between active processes (or threads of control). Deciding on the nature of the entities that the lifelines represent is crucial for such diagrams to make sense. 
In particular, synchronous interactions are suitable for passive object diagrams, and asynchronous interactions for active processes. Each kind of diagram may be given simple and consistent semantics. 

However, mixing synchronous and asynchronous interactions in the same diagram is very problematic, as it becomes extremely difficult to intuitively understand what such diagrams mean and to develop a consistent semantics for their behavior. Indeed, most of the research concerning sequence diagram semantics that we are aware of assumes an asynchronous model.  

The purpose of this paper is to define a simple formal denotational semantics for asynchronous diagrams, that covers a large subset of the notation, including nested interaction fragments, as well as the creation and deletion of lifelines. 

In addition, by considering synchronous interactions as restrictions on the pattern of asynchronous interaction, we lay the groundwork for developing the semantics of synchronous interactions in terms of the same underlying semantics. 
However, in this paper, we are concerned only with asynchronous diagrams. 

A common motivation for most works that formalize the semantics of sequence diagrams, is to enable tool support and formal verification. See for example~\cite{Aredo02}.
However, while we certainly agree with this motivation, we believe that it is equally important, and perhaps even more so, to provide a semantics that helps to clarify the meaning of such diagrams, even when we are not using formal methods. 
As the vast majority of engineers are not using formal methods, rigorously clarifying the meaning of sequence diagrams may have a larger positive impact than focusing on formal verification.

The rest of the paper is as follows. In section~\ref{denotational} we introduce an abstract syntax notation for sequence diagrams and define their denotational semantics in terms of sets of traces of atomic messages. 
In the following sections (Sections~\ref{weak} to~\ref{loop}) we describe the specific denotational semantics of each major fragment of the notation. 
This is followed by a section that defines the semantics of lifeline creation and deletion (Section~\ref{create}), and a short section discussing the proper way to describe messages to self (Section~\ref{self}). 
We conclude with a section on related work (Section~\ref{related}), a discussion (Section~\ref{discusssion}), and directions for future work (Section~\ref{summary}). 
To make the ideas easier to follow, we have moved into an appendix all the mathematical development of the functions we use to define the semantics.

\section{Basic definitions}
\label{denotational}

A basic sequence diagram consists of several lifelines with message arrows depicting communications between them. 
Interaction fragments extend the diagram's expressive power by providing a syntax to describe iteration, choice, and interleaving. 
Fragments may nest to create complex descriptions. Therefore, their semantics must be compositional. 

It is difficult to define compositional semantics directly in terms of diagrams because there is no obvious way to identify the separate parts of the diagram from its geometrical representation. 
A better approach\footnote{Several popular tools for drawing sequence diagrams have adopted this approach; see, e.g., \url{https://plantuml.com} or \url{https://www.sequencediagram.org}.} is to create an abstract syntax for the notation and to define both the geometrical representation and the denotational semantics in terms of this abstract syntax. 

Accordingly, we define an abstract syntax for sequence diagram as either a basic sequence of messages exchange or as a composition of interactions using one of the combined fragment operators: $loop$, $alt$~\footnote{The $opt$ operator is a special case of $alt$ with a single child.}, and $par$ (see Fig.~\ref{fig:ast}). 
We focus on these operators because they are the most commonly used in practice. Following the UML, we will call the nodes in the abstract syntax tree \emph{interaction fragments} or $IF$ in short.

\begin{figure}
    \centering
    \begin{syntax}
    IF & ::= & basic \ldata message + \rdata | loop \ldata IF  + \rdata  | alt \ldata IF + \rdata | par \ldata IF + \rdata
    \end{syntax}
    \caption{Abstract syntax for asynchronous sequence diagrams. The $+$ sign indicates a sequence of one or more entities. }
    \label{fig:ast}
\end{figure}

In our semantics, each lifeline represents an independent process or thread of control that interacts with other lifelines by sending and receiving messages. 
The interaction is instantaneous, after which both the sender and the receiver continue without waiting for each other. 

The meaning of an interaction fragment $f$ is a pair $(D(f),N(f))$, consisting of a set of traces $D(f)$ and a set of lifeline names $N(f)$. 
A trace describes one possible sequence of messages exchanged between lifelines. The set of lifeline names provides the scope, that is, the names of the lifelines in the traces of an interaction fragment $f$ must be included in $N(f)$.

A message is a tuple with three components: $(s,l,r)$. 
Here $s$ is the name of the lifeline that sends the message, $l$ is the message's label, and $r$ is the name of the lifeline that receives the message. 

In the following sections, we will define the semantics of each sequence diagram construct. 

\section{Weak sequential composition}
\label{weak}

The messages in a basic interaction fragment are ordered along the lifelines from top to bottom (in the AST representation, a higher message appears to the left of a lower message). 
If two messages are incident on the same lifeline, then the message that appears higher in the diagram occurs before the message that appears lower in the diagram. 
However, the order between messages that do not share the same lifeline is unspecified; this is recorded in the semantics by writing all possible permutations that satisfy the lifeline constraints. 
For example, in each trace of the diagram in Fig.~\ref{fig:asyncweak} message $m_2$ must occur after $m_1$, and message $m_4$ must occur after $m_2$ and $m_3$. However, $m_3$ may occur either before or after $m1$ or $m2$. 
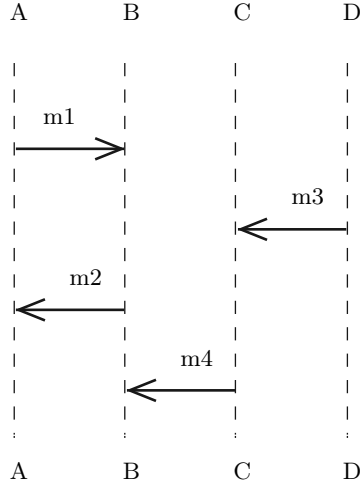
\begin{figure}
    \centering
\definecolor{plantucolor0000}{RGB}{24,24,24}
\definecolor{plantucolor0001}{RGB}{255,255,255}
\definecolor{plantucolor0002}{RGB}{0,0,0}
\begin{tikzpicture}[yscale=-1
,pstyle0/.style={color=plantucolor0000,line width=0.5pt,dash pattern=on 5.0pt off 5.0pt}
,pstyle1/.style={color=plantucolor0001,line width=0.5pt}
,pstyle2/.style={color=plantucolor0000,line width=1.0pt}
]
\draw[pstyle0] (17pt,37.7461pt) -- (17pt,179.6602pt);
\draw[pstyle0] (58.8222pt,37.7461pt) -- (58.8222pt,179.6602pt);
\draw[pstyle0] (100.6667pt,37.7461pt) -- (100.6667pt,179.6602pt);
\draw[pstyle0] (143.0222pt,37.7461pt) -- (143.0222pt,179.6602pt);
\draw[pstyle1] (5pt,10pt) arc (180:270:5pt) -- (10pt,5pt) -- (25.2444pt,5pt) arc (270:360:5pt) -- (30.2444pt,10pt) -- (30.2444pt,31.7461pt) arc (0:90:5pt) -- (25.2444pt,36.7461pt) -- (10pt,36.7461pt) arc (90:180:5pt) -- (5pt,31.7461pt) -- cycle;
\node at (12pt,12pt)[below right,color=black]{A};
\draw[pstyle1] (5pt,183.6602pt) arc (180:270:5pt) -- (10pt,178.6602pt) -- (25.2444pt,178.6602pt) arc (270:360:5pt) -- (30.2444pt,183.6602pt) -- (30.2444pt,205.4063pt) arc (0:90:5pt) -- (25.2444pt,210.4063pt) -- (10pt,210.4063pt) arc (90:180:5pt) -- (5pt,205.4063pt) -- cycle;
\node at (12pt,185.6602pt)[below right,color=black]{A};
\draw[pstyle1] (47.8222pt,10pt) arc (180:270:5pt) -- (52.8222pt,5pt) -- (66.4222pt,5pt) arc (270:360:5pt) -- (71.4222pt,10pt) -- (71.4222pt,31.7461pt) arc (0:90:5pt) -- (66.4222pt,36.7461pt) -- (52.8222pt,36.7461pt) arc (90:180:5pt) -- (47.8222pt,31.7461pt) -- cycle;
\node at (54.8222pt,12pt)[below right,color=black]{B};
\draw[pstyle1] (47.8222pt,183.6602pt) arc (180:270:5pt) -- (52.8222pt,178.6602pt) -- (66.4222pt,178.6602pt) arc (270:360:5pt) -- (71.4222pt,183.6602pt) -- (71.4222pt,205.4063pt) arc (0:90:5pt) -- (66.4222pt,210.4063pt) -- (52.8222pt,210.4063pt) arc (90:180:5pt) -- (47.8222pt,205.4063pt) -- cycle;
\node at (54.8222pt,185.6602pt)[below right,color=black]{B};
\draw[pstyle1] (89.6667pt,10pt) arc (180:270:5pt) -- (94.6667pt,5pt) -- (108.5778pt,5pt) arc (270:360:5pt) -- (113.5778pt,10pt) -- (113.5778pt,31.7461pt) arc (0:90:5pt) -- (108.5778pt,36.7461pt) -- (94.6667pt,36.7461pt) arc (90:180:5pt) -- (89.6667pt,31.7461pt) -- cycle;
\node at (96.6667pt,12pt)[below right,color=black]{C};
\draw[pstyle1] (89.6667pt,183.6602pt) arc (180:270:5pt) -- (94.6667pt,178.6602pt) -- (108.5778pt,178.6602pt) arc (270:360:5pt) -- (113.5778pt,183.6602pt) -- (113.5778pt,205.4063pt) arc (0:90:5pt) -- (108.5778pt,210.4063pt) -- (94.6667pt,210.4063pt) arc (90:180:5pt) -- (89.6667pt,205.4063pt) -- cycle;
\node at (96.6667pt,185.6602pt)[below right,color=black]{C};
\draw[pstyle1] (131.0222pt,10pt) arc (180:270:5pt) -- (136.0222pt,5pt) -- (151.2222pt,5pt) arc (270:360:5pt) -- (156.2222pt,10pt) -- (156.2222pt,31.7461pt) arc (0:90:5pt) -- (151.2222pt,36.7461pt) -- (136.0222pt,36.7461pt) arc (90:180:5pt) -- (131.0222pt,31.7461pt) -- cycle;
\node at (138.0222pt,12pt)[below right,color=black]{D};
\draw[pstyle1] (131.0222pt,183.6602pt) arc (180:270:5pt) -- (136.0222pt,178.6602pt) -- (151.2222pt,178.6602pt) arc (270:360:5pt) -- (156.2222pt,183.6602pt) -- (156.2222pt,205.4063pt) arc (0:90:5pt) -- (151.2222pt,210.4063pt) -- (136.0222pt,210.4063pt) arc (90:180:5pt) -- (131.0222pt,205.4063pt) -- cycle;
\node at (138.0222pt,185.6602pt)[below right,color=black]{D};
\draw[pstyle2] (57.6222pt,70.2246pt) -- (47.6222pt,66.2246pt);
\draw[pstyle2] (57.6222pt,70.2246pt) -- (47.6222pt,74.2246pt);
\draw[pstyle2] (17.6222pt,70.2246pt) -- (58.6222pt,70.2246pt);
\node at (24.6222pt,51.7461pt)[below right,color=black]{m1};
\draw[pstyle2] (102.6222pt,100.7031pt) -- (112.6222pt,96.7031pt);
\draw[pstyle2] (102.6222pt,100.7031pt) -- (112.6222pt,104.7031pt);
\draw[pstyle2] (101.6222pt,100.7031pt) -- (142.6222pt,100.7031pt);
\node at (118.6222pt,82.2246pt)[below right,color=black]{m3};
\draw[pstyle2] (18.6222pt,131.1816pt) -- (28.6222pt,127.1816pt);
\draw[pstyle2] (18.6222pt,131.1816pt) -- (28.6222pt,135.1816pt);
\draw[pstyle2] (17.6222pt,131.1816pt) -- (58.6222pt,131.1816pt);
\node at (34.6222pt,112.7031pt)[below right,color=black]{m2};
\draw[pstyle2] (60.6222pt,161.6602pt) -- (70.6222pt,157.6602pt);
\draw[pstyle2] (60.6222pt,161.6602pt) -- (70.6222pt,165.6602pt);
\draw[pstyle2] (59.6222pt,161.6602pt) -- (100.6222pt,161.6602pt);
\node at (76.6222pt,143.1816pt)[below right,color=black]{m4};
\end{tikzpicture}
    \caption{A simple asynchronous sequence diagram. It represents three traces, according to the possible positions of $m_3$  (either before $m_1$, between $m_1$ and $m_2$, or after $m_2$).}
    \label{fig:asyncweak}
\end{figure}

The abstract syntax of a basic diagram is a list of messages $basic(m_1,\ldots,m_n)$. 
Its semantics is given by the function $weak$ (defined in the appendix). It keeps the sequential order between messages that share a common lifeline, but interleaves messages that do not. 

\begin{zed}
    D(basic(m_1,\ldots, m_n)) = weak~\trace{m_1,\ldots,m_n}
\end{zed}

\noindent
For example, applying $weak$ to the diagram in Fig.~\ref{fig:asyncweak} gives
\begin{zed}
D(basic(m_1,m_3,m_2,m_4)) = 
weak~\trace{m_1,m_3,m_2,m_4} = \\
m_1 \cat weak~\trace{m_3,m_2,m_4} \cup m_3 \cat weak~\trace{m_1,m_2,m_4} = \\
m_1 \cat ( m_3\cat weak~\trace{m_2,m_4} \cup m_2\cat weak~\trace{m_3,m_4}) \cup m_1\cat weak~\trace{m_2,m_4} = \\
m_1\cat(m_3\cat m_2\cat m_4 \cup m_2\cup m_3\cup m_4) \cup m_3\cat m_1\cat m_2\cat m_4 = \\
\{ \trace{m_1,m_3,m_2,m_4}, \trace{m_1,m_2,m_3,m_4}, \trace{m_3,m_1,m_2,m_4} \}
\end{zed}

In the general case, the semantics of weak sequential composition for any two interaction fragments $f1$ and $f2$ is
\begin{zed}
    D(weakseq(f_1,f_2)) = \bigcup\{ t : D(f_1) \cat D(f_2) @ weak~t \}
\end{zed}
We are concatenating the two sets of traces in the usual sense of concatenating each trace from the first set with each trace in the second set.

\section{Alternatives}
\label{alt}

The fragment $alt$ describes situations where the flow could change according to a condition. 
An $alt$ fragment has several branches separated horizontally by dashed lines. 
Each branch may be annotated with a condition expression (see Fig.~\ref{fig:alt}). 
The traces of an $alt$ fragment depend on the conditions. 
When the conditions are mutually exclusive, the traces are taken from the branch whose condition is true. When several conditions may be true at the same time, the choice is nondeterministic 
and the traces are the union of the matching branches. When no conditions are true, the set of traces is empty. 

\begin{figure}
    \centering
\definecolor{plantucolor0000}{RGB}{0,0,0}
\definecolor{plantucolor0001}{RGB}{24,24,24}
\definecolor{plantucolor0002}{RGB}{255,255,255}
\definecolor{plantucolor0003}{RGB}{238,238,238}
\begin{tikzpicture}[yscale=-1
,pstyle0/.style={color=black,line width=1.5pt}
,pstyle1/.style={color=plantucolor0001,line width=0.5pt,dash pattern=on 5.0pt off 5.0pt}
,pstyle2/.style={color=plantucolor0002,line width=0.5pt}
,pstyle4/.style={color=plantucolor0001,line width=1.0pt}
]
\draw[pstyle0] (10pt,54.7461pt) rectangle (107.4pt,150.125pt);
\draw[pstyle1] (32pt,37.7461pt) -- (32pt,197.6035pt);
\draw[pstyle1] (73.8222pt,37.7461pt) -- (73.8222pt,197.6035pt);
\draw[pstyle1] (117.4222pt,37.7461pt) -- (117.4222pt,197.6035pt);
\draw[pstyle2] (20pt,10pt) arc (180:270:5pt) -- (25pt,5pt) -- (40.2444pt,5pt) arc (270:360:5pt) -- (45.2444pt,10pt) -- (45.2444pt,31.7461pt) arc (0:90:5pt) -- (40.2444pt,36.7461pt) -- (25pt,36.7461pt) arc (90:180:5pt) -- (20pt,31.7461pt) -- cycle;
\node at (27pt,12pt)[below right,color=black]{A};
\draw[pstyle2] (20pt,201.6035pt) arc (180:270:5pt) -- (25pt,196.6035pt) -- (40.2444pt,196.6035pt) arc (270:360:5pt) -- (45.2444pt,201.6035pt) -- (45.2444pt,223.3496pt) arc (0:90:5pt) -- (40.2444pt,228.3496pt) -- (25pt,228.3496pt) arc (90:180:5pt) -- (20pt,223.3496pt) -- cycle;
\node at (27pt,203.6035pt)[below right,color=black]{A};
\draw[pstyle2] (62.8222pt,10pt) arc (180:270:5pt) -- (67.8222pt,5pt) -- (81.4222pt,5pt) arc (270:360:5pt) -- (86.4222pt,10pt) -- (86.4222pt,31.7461pt) arc (0:90:5pt) -- (81.4222pt,36.7461pt) -- (67.8222pt,36.7461pt) arc (90:180:5pt) -- (62.8222pt,31.7461pt) -- cycle;
\node at (69.8222pt,12pt)[below right,color=black]{B};
\draw[pstyle2] (62.8222pt,201.6035pt) arc (180:270:5pt) -- (67.8222pt,196.6035pt) -- (81.4222pt,196.6035pt) arc (270:360:5pt) -- (86.4222pt,201.6035pt) -- (86.4222pt,223.3496pt) arc (0:90:5pt) -- (81.4222pt,228.3496pt) -- (67.8222pt,228.3496pt) arc (90:180:5pt) -- (62.8222pt,223.3496pt) -- cycle;
\node at (69.8222pt,203.6035pt)[below right,color=black]{B};
\draw[pstyle2] (96.4222pt,10pt) arc (180:270:5pt) -- (101.4222pt,5pt) -- (115.3333pt,5pt) arc (270:360:5pt) -- (120.3333pt,10pt) -- (120.3333pt,31.7461pt) arc (0:90:5pt) -- (115.3333pt,36.7461pt) -- (101.4222pt,36.7461pt) arc (90:180:5pt) -- (96.4222pt,31.7461pt) -- cycle;
\node at (113.4222pt,12pt)[below right,color=black]{C};
\draw[pstyle2] (96.4222pt,201.6035pt) arc (180:270:5pt) -- (101.4222pt,196.6035pt) -- (115.3333pt,196.6035pt) arc (270:360:5pt) -- (120.3333pt,201.6035pt) -- (120.3333pt,223.3496pt) arc (0:90:5pt) -- (115.3333pt,228.3496pt) -- (101.4222pt,228.3496pt) arc (90:180:5pt) -- (96.4222pt,223.3496pt) -- cycle;
\node at (113.4222pt,203.6035pt)[below right,color=black]{C};
\draw[color=black,fill=plantucolor0003,line width=1.5pt] (10pt,54.7461pt) -- (71.8pt,54.7461pt) -- (71.8pt,63.2246pt) -- (61.8pt,73.2246pt) -- (10pt,73.2246pt) -- (10pt,54.7461pt);
\draw[pstyle0] (10pt,54.7461pt) rectangle (107.4pt,150.125pt);
\node at (25pt,55.7461pt)[below right,color=black]{\textbf{alt}};
\node at (86.8pt,56.7461pt)[below right,color=black]{\textbf{[P]}};
\draw[pstyle4] (72.6222pt,95.7031pt) -- (62.6222pt,91.7031pt);
\draw[pstyle4] (72.6222pt,95.7031pt) -- (62.6222pt,99.7031pt);
\draw[pstyle4] (32.6222pt,95.7031pt) -- (73.6222pt,95.7031pt);
\node at (39.6222pt,77.2246pt)[below right,color=black]{m1};
\draw[color=black,line width=1.0pt,dash pattern=on 2.0pt off 2.0pt] (10pt,104.7031pt) -- (107.4pt,104.7031pt);
\node at (15pt,104.7031pt)[below right,color=black]{\textbf{[not P]}};
\draw[pstyle4] (72.6222pt,142.125pt) -- (62.6222pt,138.125pt);
\draw[pstyle4] (72.6222pt,142.125pt) -- (62.6222pt,146.125pt);
\draw[pstyle4] (32.6222pt,142.125pt) -- (73.6222pt,142.125pt);
\node at (39.6222pt,123.6465pt)[below right,color=black]{m2};
\draw[pstyle4] (116.3778pt,179.6035pt) -- (106.3778pt,175.6035pt);
\draw[pstyle4] (116.3778pt,179.6035pt) -- (106.3778pt,183.6035pt);
\draw[pstyle4] (32.6222pt,179.6035pt) -- (117.3778pt,179.6035pt);
\node at (39.6222pt,161.125pt)[below right,color=black]{m3};
\end{tikzpicture}
    \caption{An alternative fragment. Process $A$ may send either message $m1$ or $m2$ to process $B$ depending on the value of the boolean expression $P$. Afterwards $A$ sends $m3$ to $C$.}
    \label{fig:alt}
\end{figure}
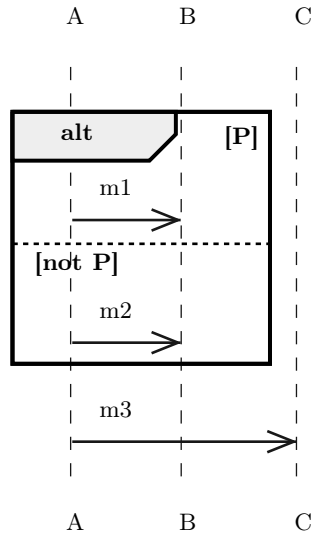

The abstract syntax of an alternative interaction fragment is $alt(f_1,\ldots,f_n)$.
We do not model the conditions because this requires that we model the expressions that appear in them, and thus an entire formal language apparatus for evaluating their values, significantly cluttering the semantics without gaining much value. Therefore, we assume that any alternative may occur, that is, the semantics will include the traces of all alternatives:
\begin{zed}
    D(alt(f_1,\ldots, f_n)) = \bigcup_{i=1}^n D(f_i)
\end{zed}

\section{Parallel interleavings}
\label{interleave}

In some cases, the order of communication is not important. 
A basic diagram, however, insists on a particular order; we could attempt to draw all possible permutations, but this is tedious and may overwhelm the reader. The $par$ fragment overcomes this problem. 
For example, Fig.~\ref{fig:asyncpar} shows a coordinator inviting two participants to a meeting. 
The participants can accept the invitation only after they received it, but the coordinator may invite the participants in any order. 

\begin{figure}
    \centering
\definecolor{plantucolor0000}{RGB}{0,0,0}
\definecolor{plantucolor0001}{RGB}{24,24,24}
\definecolor{plantucolor0002}{RGB}{255,255,255}
\definecolor{plantucolor0003}{RGB}{238,238,238}
\begin{tikzpicture}[yscale=-1
,pstyle0/.style={color=black,line width=1.5pt}
,pstyle1/.style={color=plantucolor0001,line width=0.5pt,dash pattern=on 5.0pt off 5.0pt}
,pstyle2/.style={color=plantucolor0002,line width=0.5pt}
,pstyle4/.style={color=plantucolor0001,line width=1.0pt}
]
\draw[pstyle0] (10pt,54.7461pt) rectangle (339.8286pt,197.1387pt);
\draw[pstyle1] (68pt,37.7461pt) -- (68pt,214.1387pt);
\draw[pstyle1] (174.1429pt,37.7461pt) -- (174.1429pt,214.1387pt);
\draw[pstyle1] (280.8571pt,37.7461pt) -- (280.8571pt,214.1387pt);
\draw[pstyle2] (20pt,10pt) arc (180:270:5pt) -- (25pt,5pt) -- (112.1429pt,5pt) arc (270:360:5pt) -- (117.1429pt,10pt) -- (117.1429pt,31.7461pt) arc (0:90:5pt) -- (112.1429pt,36.7461pt) -- (25pt,36.7461pt) arc (90:180:5pt) -- (20pt,31.7461pt) -- cycle;
\node at (27pt,12pt)[below right,color=black]{\underline{c:Coordinator}};
\draw[pstyle2] (20pt,218.1387pt) arc (180:270:5pt) -- (25pt,213.1387pt) -- (112.1429pt,213.1387pt) arc (270:360:5pt) -- (117.1429pt,218.1387pt) -- (117.1429pt,239.8848pt) arc (0:90:5pt) -- (112.1429pt,244.8848pt) -- (25pt,244.8848pt) arc (90:180:5pt) -- (20pt,239.8848pt) -- cycle;
\node at (27pt,220.1387pt)[below right,color=black]{\underline{c:Coordinator}};
\draw[pstyle2] (127.1429pt,10pt) arc (180:270:5pt) -- (132.1429pt,5pt) -- (217.8571pt,5pt) arc (270:360:5pt) -- (222.8571pt,10pt) -- (222.8571pt,31.7461pt) arc (0:90:5pt) -- (217.8571pt,36.7461pt) -- (132.1429pt,36.7461pt) arc (90:180:5pt) -- (127.1429pt,31.7461pt) -- cycle;
\node at (134.1429pt,12pt)[below right,color=black]{\underline{x:Participant}};
\draw[pstyle2] (127.1429pt,218.1387pt) arc (180:270:5pt) -- (132.1429pt,213.1387pt) -- (217.8571pt,213.1387pt) arc (270:360:5pt) -- (222.8571pt,218.1387pt) -- (222.8571pt,239.8848pt) arc (0:90:5pt) -- (217.8571pt,244.8848pt) -- (132.1429pt,244.8848pt) arc (90:180:5pt) -- (127.1429pt,239.8848pt) -- cycle;
\node at (134.1429pt,220.1387pt)[below right,color=black]{\underline{x:Participant}};
\draw[pstyle2] (232.8571pt,10pt) arc (180:270:5pt) -- (237.8571pt,5pt) -- (324.8286pt,5pt) arc (270:360:5pt) -- (329.8286pt,10pt) -- (329.8286pt,31.7461pt) arc (0:90:5pt) -- (324.8286pt,36.7461pt) -- (237.8571pt,36.7461pt) arc (90:180:5pt) -- (232.8571pt,31.7461pt) -- cycle;
\node at (239.8571pt,12pt)[below right,color=black]{\underline{y:Participant}};
\draw[pstyle2] (232.8571pt,218.1387pt) arc (180:270:5pt) -- (237.8571pt,213.1387pt) -- (324.8286pt,213.1387pt) arc (270:360:5pt) -- (329.8286pt,218.1387pt) -- (329.8286pt,239.8848pt) arc (0:90:5pt) -- (324.8286pt,244.8848pt) -- (237.8571pt,244.8848pt) arc (90:180:5pt) -- (232.8571pt,239.8848pt) -- cycle;
\node at (239.8571pt,220.1387pt)[below right,color=black]{\underline{y:Participant}};
\draw[color=black,fill=plantucolor0003,line width=1.5pt] (10pt,54.7461pt) -- (75.9333pt,54.7461pt) -- (75.9333pt,63.2246pt) -- (65.9333pt,73.2246pt) -- (10pt,73.2246pt) -- (10pt,54.7461pt);
\draw[pstyle0] (10pt,54.7461pt) rectangle (339.8286pt,197.1387pt);
\node at (25pt,55.7461pt)[below right,color=black]{\textbf{par}};
\draw[pstyle4] (173pt,95.7031pt) -- (163pt,91.7031pt);
\draw[pstyle4] (173pt,95.7031pt) -- (163pt,99.7031pt);
\draw[pstyle4] (68.5714pt,95.7031pt) -- (174pt,95.7031pt);
\node at (75.5714pt,77.2246pt)[below right,color=black]{invite};
\draw[pstyle4] (69.5714pt,126.1816pt) -- (79.5714pt,122.1816pt);
\draw[pstyle4] (69.5714pt,126.1816pt) -- (79.5714pt,130.1816pt);
\draw[pstyle4] (68.5714pt,126.1816pt) -- (174pt,126.1816pt);
\node at (85.5714pt,107.7031pt)[below right,color=black]{accept};
\draw[color=black,line width=1.0pt,dash pattern=on 2.0pt off 2.0pt] (10pt,135.1816pt) -- (339.8286pt,135.1816pt);
\draw[pstyle4] (279.3429pt,158.6602pt) -- (269.3429pt,154.6602pt);
\draw[pstyle4] (279.3429pt,158.6602pt) -- (269.3429pt,162.6602pt);
\draw[pstyle4] (68.5714pt,158.6602pt) -- (280.3429pt,158.6602pt);
\node at (75.5714pt,140.1816pt)[below right,color=black]{invite};
\draw[pstyle4] (69.5714pt,189.1387pt) -- (79.5714pt,185.1387pt);
\draw[pstyle4] (69.5714pt,189.1387pt) -- (79.5714pt,193.1387pt);
\draw[pstyle4] (68.5714pt,189.1387pt) -- (280.3429pt,189.1387pt);
\node at (85.5714pt,170.6602pt)[below right,color=black]{accept};
\end{tikzpicture}
    \caption{A parallel combined fragment. The coordinator may invite the participants in any order. However, each participant accepts only after they have been invited. }
    \label{fig:asyncpar}
\end{figure}
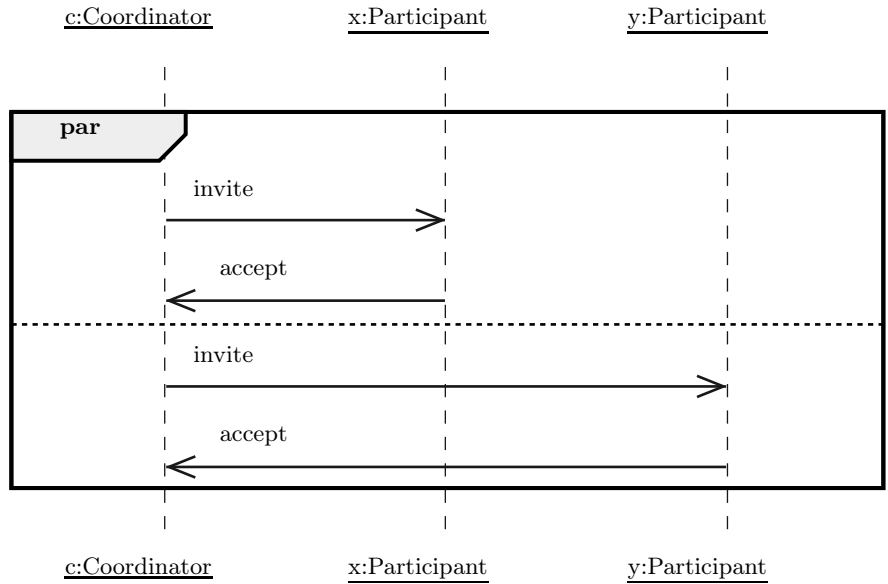

To simplify the discussion, we describe the semantics of a parallel fragment with just two children. 
In a fragment with more children, we apply the interleaving operator repeatedly.

\begin{zed}
    D(par(f_1,f_2)) = interleave~(D(f_1),D(f_2))
\end{zed}

The $interleave$ function calculates the set of possible interleavings of each trace in its first argument with each trace in its second argument. For example, calculating the traces for the diagram in Fig.~\ref{fig:asyncpar} gives
\begin{zed}
    D(par(basic(c.invite.x, x.accept.c), basic(c.invite.y, y.accept.c))) = \\ 
    interleave(D(basic(c.invite.x, x.accept.c)), D(basic(c.invite.y, y.accept.c))) = \\
    interleave(\{\trace{c.invite.x,x.accept.c}\}, \{\trace{c.invite.y,y.accept.c}\}) = \\
    \trace{c.invite.x,x.accept.c} \interleave \trace{c.invite.y,y.accept.c} = \\
    \{ \trace {c.invite.x, x.accept.c, c.invite.y, y.accept.c } , \\
     \trace {c.invite.x, c.invite.y, x.accept.c, y.accept.c }, \\
     \trace {c.invite.x, c.invite.y, y.accept.c, x.accept.c }, \\
     \trace {c.invite.y, c.invite.x, x.accept.c, y.accept.c }, \\
     \trace {c.invite.y, c.invite.x, y.accept.c, x.accept.c},  \\
     \trace {c.invite.y, y.accept.c, c.invite.x, x.accept.c} \}
\end{zed}

\section{Loops}
\label{loop}

Loops are used to describe iterative behaviors. 
To simplify the discussion, we consider loop fragments that do not have a terminating guard. Such loops represent the behavior of an unbounded number of iterations. To explain the loop's semantics it is useful to define several operators on sets of traces. Given sets of traces $u$ and $v$,

\begin{zed}
    u v = \{ x \in u \land y \in v @ x\cat y \} 
\also 
    u^0 = \{ \trace{} \}    
\also 
    u^{n+1} = u^n u
\also
    u^{*} = \cup_{n=0}^\infty u^n
\end{zed}

These are the familiar string and language operators we use in the theory of computation~\cite{rich}, except our strings are traces (and the messages are the symbols). 

As a first approximation, consider a loop $loop(f)$, then we may define the loop's traces to be the Kleene closure of $f$'s traces, that is, as $D(f)^*$. 
For example, the traces of the loop's body in Fig.~\ref{fig:asyncloop} require that $m_2$ appears after $m_1$ and $m_4$ after $m_3$. This constraint holds in each iteration. However, the weak sequential semantics does not imply that all four processes will communicate in each iteration. It is possible for example that the first $k$ iterations will involve just $A$ and $B$ before $C$ and $D$ communicate. This becomes obvious when we unfold the loop and consider the weak semantics of the unfolded version. See Fig.~\ref{fig:unfoldedloop}.

\begin{figure}
    \centering
\definecolor{plantucolor0000}{RGB}{0,0,0}
\definecolor{plantucolor0001}{RGB}{24,24,24}
\definecolor{plantucolor0002}{RGB}{255,255,255}
\definecolor{plantucolor0003}{RGB}{238,238,238}
\begin{tikzpicture}[yscale=-1
,pstyle0/.style={color=black,line width=1.5pt}
,pstyle1/.style={color=plantucolor0001,line width=0.5pt,dash pattern=on 5.0pt off 5.0pt}
,pstyle2/.style={color=plantucolor0002,line width=0.5pt}
,pstyle4/.style={color=plantucolor0001,line width=1.0pt}
]
\draw[pstyle0] (10pt,54.7461pt) rectangle (172.9778pt,195.1387pt);
\draw[pstyle1] (32pt,37.7461pt) -- (32pt,212.1387pt);
\draw[pstyle1] (73.8222pt,37.7461pt) -- (73.8222pt,212.1387pt);
\draw[pstyle1] (107.4222pt,37.7461pt) -- (107.4222pt,212.1387pt);
\draw[pstyle1] (149.7778pt,37.7461pt) -- (149.7778pt,212.1387pt);
\draw[pstyle2] (20pt,10pt) arc (180:270:5pt) -- (25pt,5pt) -- (40.2444pt,5pt) arc (270:360:5pt) -- (45.2444pt,10pt) -- (45.2444pt,31.7461pt) arc (0:90:5pt) -- (40.2444pt,36.7461pt) -- (25pt,36.7461pt) arc (90:180:5pt) -- (20pt,31.7461pt) -- cycle;
\node at (27pt,12pt)[below right,color=black]{A};
\draw[pstyle2] (20pt,216.1387pt) arc (180:270:5pt) -- (25pt,211.1387pt) -- (40.2444pt,211.1387pt) arc (270:360:5pt) -- (45.2444pt,216.1387pt) -- (45.2444pt,237.8848pt) arc (0:90:5pt) -- (40.2444pt,242.8848pt) -- (25pt,242.8848pt) arc (90:180:5pt) -- (20pt,237.8848pt) -- cycle;
\node at (27pt,218.1387pt)[below right,color=black]{A};
\draw[pstyle2] (62.8222pt,10pt) arc (180:270:5pt) -- (67.8222pt,5pt) -- (81.4222pt,5pt) arc (270:360:5pt) -- (86.4222pt,10pt) -- (86.4222pt,31.7461pt) arc (0:90:5pt) -- (81.4222pt,36.7461pt) -- (67.8222pt,36.7461pt) arc (90:180:5pt) -- (62.8222pt,31.7461pt) -- cycle;
\node at (69.8222pt,12pt)[below right,color=black]{B};
\draw[pstyle2] (62.8222pt,216.1387pt) arc (180:270:5pt) -- (67.8222pt,211.1387pt) -- (81.4222pt,211.1387pt) arc (270:360:5pt) -- (86.4222pt,216.1387pt) -- (86.4222pt,237.8848pt) arc (0:90:5pt) -- (81.4222pt,242.8848pt) -- (67.8222pt,242.8848pt) arc (90:180:5pt) -- (62.8222pt,237.8848pt) -- cycle;
\node at (69.8222pt,218.1387pt)[below right,color=black]{B};
\draw[pstyle2] (96.4222pt,10pt) arc (180:270:5pt) -- (101.4222pt,5pt) -- (115.3333pt,5pt) arc (270:360:5pt) -- (120.3333pt,10pt) -- (120.3333pt,31.7461pt) arc (0:90:5pt) -- (115.3333pt,36.7461pt) -- (101.4222pt,36.7461pt) arc (90:180:5pt) -- (96.4222pt,31.7461pt) -- cycle;
\node at (103.4222pt,12pt)[below right,color=black]{C};
\draw[pstyle2] (96.4222pt,216.1387pt) arc (180:270:5pt) -- (101.4222pt,211.1387pt) -- (115.3333pt,211.1387pt) arc (270:360:5pt) -- (120.3333pt,216.1387pt) -- (120.3333pt,237.8848pt) arc (0:90:5pt) -- (115.3333pt,242.8848pt) -- (101.4222pt,242.8848pt) arc (90:180:5pt) -- (96.4222pt,237.8848pt) -- cycle;
\node at (103.4222pt,218.1387pt)[below right,color=black]{C};
\draw[pstyle2] (137.7778pt,10pt) arc (180:270:5pt) -- (142.7778pt,5pt) -- (157.9778pt,5pt) arc (270:360:5pt) -- (162.9778pt,10pt) -- (162.9778pt,31.7461pt) arc (0:90:5pt) -- (157.9778pt,36.7461pt) -- (142.7778pt,36.7461pt) arc (90:180:5pt) -- (137.7778pt,31.7461pt) -- cycle;
\node at (144.7778pt,12pt)[below right,color=black]{D};
\draw[pstyle2] (137.7778pt,216.1387pt) arc (180:270:5pt) -- (142.7778pt,211.1387pt) -- (157.9778pt,211.1387pt) arc (270:360:5pt) -- (162.9778pt,216.1387pt) -- (162.9778pt,237.8848pt) arc (0:90:5pt) -- (157.9778pt,242.8848pt) -- (142.7778pt,242.8848pt) arc (90:180:5pt) -- (137.7778pt,237.8848pt) -- cycle;
\node at (144.7778pt,218.1387pt)[below right,color=black]{D};
\draw[color=black,fill=plantucolor0003,line width=1.5pt] (10pt,54.7461pt) -- (81.4pt,54.7461pt) -- (81.4pt,63.2246pt) -- (71.4pt,73.2246pt) -- (10pt,73.2246pt) -- (10pt,54.7461pt);
\draw[pstyle0] (10pt,54.7461pt) rectangle (172.9778pt,195.1387pt);
\node at (25pt,55.7461pt)[below right,color=black]{\textbf{loop}};
\draw[pstyle4] (72.6222pt,95.7031pt) -- (62.6222pt,91.7031pt);
\draw[pstyle4] (72.6222pt,95.7031pt) -- (62.6222pt,99.7031pt);
\draw[pstyle4] (32.6222pt,95.7031pt) -- (73.6222pt,95.7031pt);
\node at (39.6222pt,77.2246pt)[below right,color=black]{m1};
\draw[pstyle4] (109.3778pt,126.1816pt) -- (119.3778pt,122.1816pt);
\draw[pstyle4] (109.3778pt,126.1816pt) -- (119.3778pt,130.1816pt);
\draw[pstyle4] (108.3778pt,126.1816pt) -- (149.3778pt,126.1816pt);
\node at (125.3778pt,107.7031pt)[below right,color=black]{m3};
\draw[pstyle4] (72.6222pt,156.6602pt) -- (62.6222pt,152.6602pt);
\draw[pstyle4] (72.6222pt,156.6602pt) -- (62.6222pt,160.6602pt);
\draw[pstyle4] (32.6222pt,156.6602pt) -- (73.6222pt,156.6602pt);
\node at (39.6222pt,138.1816pt)[below right,color=black]{m2};
\draw[pstyle4] (109.3778pt,187.1387pt) -- (119.3778pt,183.1387pt);
\draw[pstyle4] (109.3778pt,187.1387pt) -- (119.3778pt,191.1387pt);
\draw[pstyle4] (108.3778pt,187.1387pt) -- (149.3778pt,187.1387pt);
\node at (125.3778pt,168.6602pt)[below right,color=black]{m4};
\end{tikzpicture}
    \caption{Asynchronous loop}
    \label{fig:asyncloop}
\end{figure}
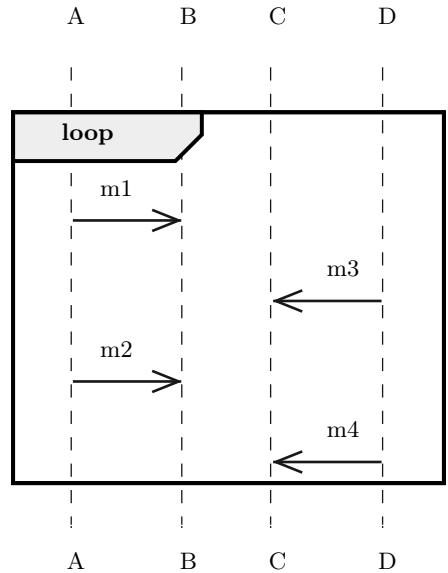

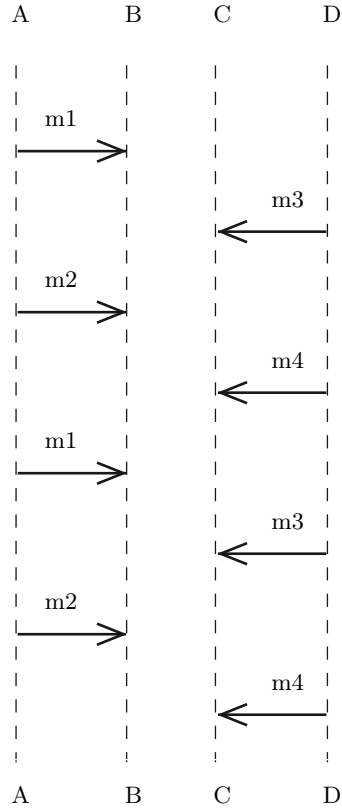
\begin{figure}[tb]
    \centering
\definecolor{plantucolor0000}{RGB}{24,24,24}
\definecolor{plantucolor0001}{RGB}{255,255,255}
\definecolor{plantucolor0002}{RGB}{0,0,0}
\begin{tikzpicture}[yscale=-1
,pstyle0/.style={color=plantucolor0000,line width=0.5pt,dash pattern=on 5.0pt off 5.0pt}
,pstyle1/.style={color=plantucolor0001,line width=0.5pt}
,pstyle2/.style={color=plantucolor0000,line width=1.0pt}
]
\draw[pstyle0] (17pt,37.7461pt) -- (17pt,301.5742pt);
\draw[pstyle0] (58.8222pt,37.7461pt) -- (58.8222pt,301.5742pt);
\draw[pstyle0] (92.4222pt,37.7461pt) -- (92.4222pt,301.5742pt);
\draw[pstyle0] (134.7778pt,37.7461pt) -- (134.7778pt,301.5742pt);
\draw[pstyle1] (5pt,10pt) arc (180:270:5pt) -- (10pt,5pt) -- (25.2444pt,5pt) arc (270:360:5pt) -- (30.2444pt,10pt) -- (30.2444pt,31.7461pt) arc (0:90:5pt) -- (25.2444pt,36.7461pt) -- (10pt,36.7461pt) arc (90:180:5pt) -- (5pt,31.7461pt) -- cycle;
\node at (12pt,12pt)[below right,color=black]{A};
\draw[pstyle1] (5pt,305.5742pt) arc (180:270:5pt) -- (10pt,300.5742pt) -- (25.2444pt,300.5742pt) arc (270:360:5pt) -- (30.2444pt,305.5742pt) -- (30.2444pt,327.3203pt) arc (0:90:5pt) -- (25.2444pt,332.3203pt) -- (10pt,332.3203pt) arc (90:180:5pt) -- (5pt,327.3203pt) -- cycle;
\node at (12pt,307.5742pt)[below right,color=black]{A};
\draw[pstyle1] (47.8222pt,10pt) arc (180:270:5pt) -- (52.8222pt,5pt) -- (66.4222pt,5pt) arc (270:360:5pt) -- (71.4222pt,10pt) -- (71.4222pt,31.7461pt) arc (0:90:5pt) -- (66.4222pt,36.7461pt) -- (52.8222pt,36.7461pt) arc (90:180:5pt) -- (47.8222pt,31.7461pt) -- cycle;
\node at (54.8222pt,12pt)[below right,color=black]{B};
\draw[pstyle1] (47.8222pt,305.5742pt) arc (180:270:5pt) -- (52.8222pt,300.5742pt) -- (66.4222pt,300.5742pt) arc (270:360:5pt) -- (71.4222pt,305.5742pt) -- (71.4222pt,327.3203pt) arc (0:90:5pt) -- (66.4222pt,332.3203pt) -- (52.8222pt,332.3203pt) arc (90:180:5pt) -- (47.8222pt,327.3203pt) -- cycle;
\node at (54.8222pt,307.5742pt)[below right,color=black]{B};
\draw[pstyle1] (81.4222pt,10pt) arc (180:270:5pt) -- (86.4222pt,5pt) -- (100.3333pt,5pt) arc (270:360:5pt) -- (105.3333pt,10pt) -- (105.3333pt,31.7461pt) arc (0:90:5pt) -- (100.3333pt,36.7461pt) -- (86.4222pt,36.7461pt) arc (90:180:5pt) -- (81.4222pt,31.7461pt) -- cycle;
\node at (88.4222pt,12pt)[below right,color=black]{C};
\draw[pstyle1] (81.4222pt,305.5742pt) arc (180:270:5pt) -- (86.4222pt,300.5742pt) -- (100.3333pt,300.5742pt) arc (270:360:5pt) -- (105.3333pt,305.5742pt) -- (105.3333pt,327.3203pt) arc (0:90:5pt) -- (100.3333pt,332.3203pt) -- (86.4222pt,332.3203pt) arc (90:180:5pt) -- (81.4222pt,327.3203pt) -- cycle;
\node at (88.4222pt,307.5742pt)[below right,color=black]{C};
\draw[pstyle1] (122.7778pt,10pt) arc (180:270:5pt) -- (127.7778pt,5pt) -- (142.9778pt,5pt) arc (270:360:5pt) -- (147.9778pt,10pt) -- (147.9778pt,31.7461pt) arc (0:90:5pt) -- (142.9778pt,36.7461pt) -- (127.7778pt,36.7461pt) arc (90:180:5pt) -- (122.7778pt,31.7461pt) -- cycle;
\node at (129.7778pt,12pt)[below right,color=black]{D};
\draw[pstyle1] (122.7778pt,305.5742pt) arc (180:270:5pt) -- (127.7778pt,300.5742pt) -- (142.9778pt,300.5742pt) arc (270:360:5pt) -- (147.9778pt,305.5742pt) -- (147.9778pt,327.3203pt) arc (0:90:5pt) -- (142.9778pt,332.3203pt) -- (127.7778pt,332.3203pt) arc (90:180:5pt) -- (122.7778pt,327.3203pt) -- cycle;
\node at (129.7778pt,307.5742pt)[below right,color=black]{D};
\draw[pstyle2] (57.6222pt,70.2246pt) -- (47.6222pt,66.2246pt);
\draw[pstyle2] (57.6222pt,70.2246pt) -- (47.6222pt,74.2246pt);
\draw[pstyle2] (17.6222pt,70.2246pt) -- (58.6222pt,70.2246pt);
\node at (24.6222pt,51.7461pt)[below right,color=black]{m1};
\draw[pstyle2] (94.3778pt,100.7031pt) -- (104.3778pt,96.7031pt);
\draw[pstyle2] (94.3778pt,100.7031pt) -- (104.3778pt,104.7031pt);
\draw[pstyle2] (93.3778pt,100.7031pt) -- (134.3778pt,100.7031pt);
\node at (110.3778pt,82.2246pt)[below right,color=black]{m3};
\draw[pstyle2] (57.6222pt,131.1816pt) -- (47.6222pt,127.1816pt);
\draw[pstyle2] (57.6222pt,131.1816pt) -- (47.6222pt,135.1816pt);
\draw[pstyle2] (17.6222pt,131.1816pt) -- (58.6222pt,131.1816pt);
\node at (24.6222pt,112.7031pt)[below right,color=black]{m2};
\draw[pstyle2] (94.3778pt,161.6602pt) -- (104.3778pt,157.6602pt);
\draw[pstyle2] (94.3778pt,161.6602pt) -- (104.3778pt,165.6602pt);
\draw[pstyle2] (93.3778pt,161.6602pt) -- (134.3778pt,161.6602pt);
\node at (110.3778pt,143.1816pt)[below right,color=black]{m4};
\draw[pstyle2] (57.6222pt,192.1387pt) -- (47.6222pt,188.1387pt);
\draw[pstyle2] (57.6222pt,192.1387pt) -- (47.6222pt,196.1387pt);
\draw[pstyle2] (17.6222pt,192.1387pt) -- (58.6222pt,192.1387pt);
\node at (24.6222pt,173.6602pt)[below right,color=black]{m1};
\draw[pstyle2] (94.3778pt,222.6172pt) -- (104.3778pt,218.6172pt);
\draw[pstyle2] (94.3778pt,222.6172pt) -- (104.3778pt,226.6172pt);
\draw[pstyle2] (93.3778pt,222.6172pt) -- (134.3778pt,222.6172pt);
\node at (110.3778pt,204.1387pt)[below right,color=black]{m3};
\draw[pstyle2] (57.6222pt,253.0957pt) -- (47.6222pt,249.0957pt);
\draw[pstyle2] (57.6222pt,253.0957pt) -- (47.6222pt,257.0957pt);
\draw[pstyle2] (17.6222pt,253.0957pt) -- (58.6222pt,253.0957pt);
\node at (24.6222pt,234.6172pt)[below right,color=black]{m2};
\draw[pstyle2] (94.3778pt,283.5742pt) -- (104.3778pt,279.5742pt);
\draw[pstyle2] (94.3778pt,283.5742pt) -- (104.3778pt,287.5742pt);
\draw[pstyle2] (93.3778pt,283.5742pt) -- (134.3778pt,283.5742pt);
\node at (110.3778pt,265.0957pt)[below right,color=black]{m4};
\end{tikzpicture}
    \caption{Unfolding twice the loop in Fig.~\ref{fig:asyncloop}. The communication between $A$ and $B$ is interleaved with the communication between $C$ and $D$. In particular, all communication between $C$ and $D$ may occur before any communication between $A$ and $B$ (and vice versa). }
    \label{fig:unfoldedloop}
\end{figure}

Thus, the semantics of a loop fragment is given by applying $weak$ to each trace in the Kleene closure of its body:
\begin{zed}
    D(loop(b)) = \bigcup \{ t : D(b)^* @ weak(t) \}
\end{zed}

\section{Creating and deleting lifelines}
\label{create}

We create a new lifeline by sending a special $create$ message that starts the new lifeline immediately after the message is received. We destroy a lifeline by sending a special $destroy$ message to the lifeline. Destruction is indicated by an X sign, after which the lifeline ends.  

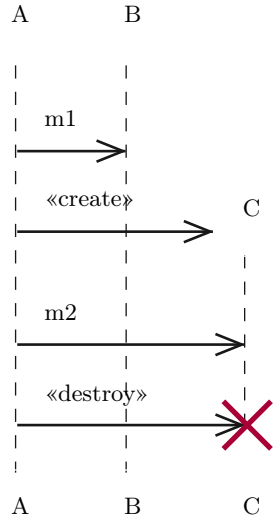
\begin{figure}
    \begin{center}
\definecolor{plantucolor0000}{RGB}{24,24,24}
\definecolor{plantucolor0001}{RGB}{255,255,255}
\definecolor{plantucolor0002}{RGB}{0,0,0}
\definecolor{plantucolor0003}{RGB}{168,0,54}
\begin{tikzpicture}[yscale=-1
,pstyle0/.style={color=plantucolor0000,line width=0.5pt,dash pattern=on 5.0pt off 5.0pt}
,pstyle1/.style={color=plantucolor0001,line width=0.5pt}
,pstyle2/.style={color=plantucolor0000,line width=1.0pt}
,pstyle3/.style={color=plantucolor0003,line width=2.0pt}
]
\draw[pstyle0] (17pt,37.7461pt) -- (17pt,191.9277pt);
\draw[pstyle0] (58.8222pt,37.7461pt) -- (58.8222pt,191.9277pt);
\draw[pstyle0] (103.6483pt,109.5977pt) -- (103.6483pt,173.9277pt);
\draw[pstyle1] (5pt,10pt) arc (180:270:5pt) -- (10pt,5pt) -- (25.2444pt,5pt) arc (270:360:5pt) -- (30.2444pt,10pt) -- (30.2444pt,31.7461pt) arc (0:90:5pt) -- (25.2444pt,36.7461pt) -- (10pt,36.7461pt) arc (90:180:5pt) -- (5pt,31.7461pt) -- cycle;
\node at (12pt,12pt)[below right,color=black]{A};
\draw[pstyle1] (5pt,195.9277pt) arc (180:270:5pt) -- (10pt,190.9277pt) -- (25.2444pt,190.9277pt) arc (270:360:5pt) -- (30.2444pt,195.9277pt) -- (30.2444pt,217.6738pt) arc (0:90:5pt) -- (25.2444pt,222.6738pt) -- (10pt,222.6738pt) arc (90:180:5pt) -- (5pt,217.6738pt) -- cycle;
\node at (12pt,197.9277pt)[below right,color=black]{A};
\draw[pstyle1] (47.8222pt,10pt) arc (180:270:5pt) -- (52.8222pt,5pt) -- (66.4222pt,5pt) arc (270:360:5pt) -- (71.4222pt,10pt) -- (71.4222pt,31.7461pt) arc (0:90:5pt) -- (66.4222pt,36.7461pt) -- (52.8222pt,36.7461pt) arc (90:180:5pt) -- (47.8222pt,31.7461pt) -- cycle;
\node at (54.8222pt,12pt)[below right,color=black]{B};
\draw[pstyle1] (47.8222pt,195.9277pt) arc (180:270:5pt) -- (52.8222pt,190.9277pt) -- (66.4222pt,190.9277pt) arc (270:360:5pt) -- (71.4222pt,195.9277pt) -- (71.4222pt,217.6738pt) arc (0:90:5pt) -- (66.4222pt,222.6738pt) -- (52.8222pt,222.6738pt) arc (90:180:5pt) -- (47.8222pt,217.6738pt) -- cycle;
\node at (54.8222pt,197.9277pt)[below right,color=black]{B};
\draw[pstyle1] (92.6483pt,195.9277pt) arc (180:270:5pt) -- (97.6483pt,190.9277pt) -- (111.5594pt,190.9277pt) arc (270:360:5pt) -- (116.5594pt,195.9277pt) -- (116.5594pt,217.6738pt) arc (0:90:5pt) -- (111.5594pt,222.6738pt) -- (97.6483pt,222.6738pt) arc (90:180:5pt) -- (92.6483pt,217.6738pt) -- cycle;
\node at (99.6483pt,197.9277pt)[below right,color=black]{C};
\draw[pstyle2] (57.6222pt,70.2246pt) -- (47.6222pt,66.2246pt);
\draw[pstyle2] (57.6222pt,70.2246pt) -- (47.6222pt,74.2246pt);
\draw[pstyle2] (17.6222pt,70.2246pt) -- (58.6222pt,70.2246pt);
\node at (24.6222pt,51.7461pt)[below right,color=black]{m1};
\draw[pstyle2] (90.6483pt,100.7031pt) -- (80.6483pt,96.7031pt);
\draw[pstyle2] (90.6483pt,100.7031pt) -- (80.6483pt,104.7031pt);
\draw[pstyle2] (17.6222pt,100.7031pt) -- (91.6483pt,100.7031pt);
\node at (24.6222pt,82.2246pt)[below right,color=black]{\guillemotleft create\guillemotright };
\draw[pstyle1] (92.6483pt,83.2246pt) arc (180:270:5pt) -- (97.6483pt,78.2246pt) -- (111.5594pt,78.2246pt) arc (270:360:5pt) -- (116.5594pt,83.2246pt) -- (116.5594pt,104.9707pt) arc (0:90:5pt) -- (111.5594pt,109.9707pt) -- (97.6483pt,109.9707pt) arc (90:180:5pt) -- (92.6483pt,104.9707pt) -- cycle;
\node at (99.6483pt,85.2246pt)[below right,color=black]{C};
\draw[pstyle2] (102.6039pt,143.4492pt) -- (92.6039pt,139.4492pt);
\draw[pstyle2] (102.6039pt,143.4492pt) -- (92.6039pt,147.4492pt);
\draw[pstyle2] (17.6222pt,143.4492pt) -- (103.6039pt,143.4492pt);
\node at (24.6222pt,124.9707pt)[below right,color=black]{m2};
\draw[pstyle2] (102.6039pt,173.9277pt) -- (92.6039pt,169.9277pt);
\draw[pstyle2] (102.6039pt,173.9277pt) -- (92.6039pt,177.9277pt);
\draw[pstyle2] (17.6222pt,173.9277pt) -- (103.6039pt,173.9277pt);
\node at (24.6222pt,155.4492pt)[below right,color=black]{\guillemotleft destroy\guillemotright };
\draw[pstyle3] (95.6039pt,164.9277pt) -- (113.6039pt,182.9277pt);
\draw[pstyle3] (95.6039pt,182.9277pt) -- (113.6039pt,164.9277pt);
\end{tikzpicture}
    \end{center}
    \caption{Creating and destroying lifelines. }
    \label{fig:asyncreate}
\end{figure}

Creation and deletion are deceptively simple. However, a closer inspection reveals subtle interactions between creation, deletion and combined fragments. For example, consider a diagram that describes a common worker thread scenario. A client schedules a job, the server creates a new process to run the job, and the process repeats. See Fig.~\ref{fig:serverbad}. 
What shall we do with the process's lifeline inside the loop fragment? If we destroy it, we kill the process each time we go through the loop. If not, then it is not clear what this lifeline represents beyond the loop's fragment. 

\begin{figure}[tb]
    \begin{center}
\definecolor{plantucolor0000}{RGB}{0,0,0}
\definecolor{plantucolor0001}{RGB}{24,24,24}
\definecolor{plantucolor0002}{RGB}{255,255,255}
\definecolor{plantucolor0003}{RGB}{238,238,238}
\begin{tikzpicture}[yscale=-1
,pstyle0/.style={color=black,line width=1.5pt}
,pstyle1/.style={color=plantucolor0001,line width=0.5pt,dash pattern=on 5.0pt off 5.0pt}
,pstyle2/.style={color=plantucolor0002,line width=0.5pt}
,pstyle4/.style={color=plantucolor0001,line width=1.0pt}
]
\draw[pstyle0] (10pt,54.7461pt) rectangle (239.3581pt,176.9277pt);
\draw[pstyle1] (31pt,37.7461pt) -- (31pt,224.4063pt);
\draw[pstyle1] (131.182pt,37.7461pt) -- (131.182pt,224.4063pt);
\draw[pstyle1] (217.7581pt,135.0762pt) -- (217.7581pt,224.4063pt);
\draw[pstyle2] (20pt,10pt) arc (180:270:5pt) -- (25pt,5pt) -- (38.9111pt,5pt) arc (270:360:5pt) -- (43.9111pt,10pt) -- (43.9111pt,31.7461pt) arc (0:90:5pt) -- (38.9111pt,36.7461pt) -- (25pt,36.7461pt) arc (90:180:5pt) -- (20pt,31.7461pt) -- cycle;
\node at (27pt,12pt)[below right,color=black]{C};
\draw[pstyle2] (20pt,228.4063pt) arc (180:270:5pt) -- (25pt,223.4063pt) -- (38.9111pt,223.4063pt) arc (270:360:5pt) -- (43.9111pt,228.4063pt) -- (43.9111pt,250.1523pt) arc (0:90:5pt) -- (38.9111pt,255.1523pt) -- (25pt,255.1523pt) arc (90:180:5pt) -- (20pt,250.1523pt) -- cycle;
\node at (27pt,230.4063pt)[below right,color=black]{C};
\draw[pstyle2] (120.182pt,10pt) arc (180:270:5pt) -- (125.182pt,5pt) -- (138.282pt,5pt) arc (270:360:5pt) -- (143.282pt,10pt) -- (143.282pt,31.7461pt) arc (0:90:5pt) -- (138.282pt,36.7461pt) -- (125.182pt,36.7461pt) arc (90:180:5pt) -- (120.182pt,31.7461pt) -- cycle;
\node at (127.182pt,12pt)[below right,color=black]{S};
\draw[pstyle2] (120.182pt,228.4063pt) arc (180:270:5pt) -- (125.182pt,223.4063pt) -- (138.282pt,223.4063pt) arc (270:360:5pt) -- (143.282pt,228.4063pt) -- (143.282pt,250.1523pt) arc (0:90:5pt) -- (138.282pt,255.1523pt) -- (125.182pt,255.1523pt) arc (90:180:5pt) -- (120.182pt,250.1523pt) -- cycle;
\node at (127.182pt,230.4063pt)[below right,color=black]{S};
\draw[pstyle2] (206.7581pt,228.4063pt) arc (180:270:5pt) -- (211.7581pt,223.4063pt) -- (224.3581pt,223.4063pt) arc (270:360:5pt) -- (229.3581pt,228.4063pt) -- (229.3581pt,250.1523pt) arc (0:90:5pt) -- (224.3581pt,255.1523pt) -- (211.7581pt,255.1523pt) arc (90:180:5pt) -- (206.7581pt,250.1523pt) -- cycle;
\node at (213.7581pt,230.4063pt)[below right,color=black]{P};
\draw[color=black,fill=plantucolor0003,line width=1.5pt] (10pt,54.7461pt) -- (81.4pt,54.7461pt) -- (81.4pt,63.2246pt) -- (71.4pt,73.2246pt) -- (10pt,73.2246pt) -- (10pt,54.7461pt);
\draw[pstyle0] (10pt,54.7461pt) rectangle (239.3581pt,176.9277pt);
\node at (25pt,55.7461pt)[below right,color=black]{\textbf{loop}};
\draw[pstyle4] (129.732pt,95.7031pt) -- (119.732pt,91.7031pt);
\draw[pstyle4] (129.732pt,95.7031pt) -- (119.732pt,99.7031pt);
\draw[pstyle4] (31.9556pt,95.7031pt) -- (130.732pt,95.7031pt);
\node at (38.9556pt,77.2246pt)[below right,color=black]{schedule job};
\draw[pstyle4] (204.7581pt,126.1816pt) -- (194.7581pt,122.1816pt);
\draw[pstyle4] (204.7581pt,126.1816pt) -- (194.7581pt,130.1816pt);
\draw[pstyle4] (131.732pt,126.1816pt) -- (205.7581pt,126.1816pt);
\node at (138.732pt,107.7031pt)[below right,color=black]{\guillemotleft create\guillemotright };
\draw[pstyle2] (206.7581pt,108.7031pt) arc (180:270:5pt) -- (211.7581pt,103.7031pt) -- (224.3581pt,103.7031pt) arc (270:360:5pt) -- (229.3581pt,108.7031pt) -- (229.3581pt,130.4492pt) arc (0:90:5pt) -- (224.3581pt,135.4492pt) -- (211.7581pt,135.4492pt) arc (90:180:5pt) -- (206.7581pt,130.4492pt) -- cycle;
\node at (213.7581pt,110.7031pt)[below right,color=black]{P};
\draw[pstyle4] (216.0581pt,168.9277pt) -- (206.0581pt,164.9277pt);
\draw[pstyle4] (216.0581pt,168.9277pt) -- (206.0581pt,172.9277pt);
\draw[pstyle4] (131.732pt,168.9277pt) -- (217.0581pt,168.9277pt);
\node at (138.732pt,150.4492pt)[below right,color=black]{run job};
\draw[pstyle4] (216.0581pt,206.4063pt) -- (206.0581pt,202.4063pt);
\draw[pstyle4] (216.0581pt,206.4063pt) -- (206.0581pt,210.4063pt);
\draw[pstyle4] (131.732pt,206.4063pt) -- (217.0581pt,206.4063pt);
\node at (138.732pt,187.9277pt)[below right,color=black]{?};
\end{tikzpicture}
    \end{center}
    \caption{We run into trouble when the lifeline models a process. The loop creates a new process in each iteration, but the diagram shows a single lifeline $P$. Does $P$ refer to a single process? which one?}
    \label{fig:serverbad}
\end{figure}
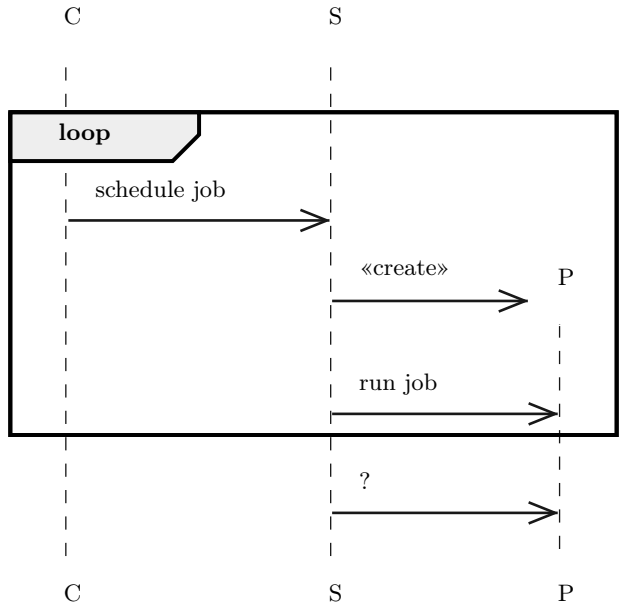

A way out of this problem is to define lifelines, not as direct representations of processes, but as names of variables that reference processes, and treat the body of combined fragments as a variables scope. 
Now, a $create$ message creates two things: a process and a local variable that refers to that process. The lifeline describes the local variable. Thus, it may reference in each iteration a different process, and when the loop terminates it goes out of scope, but the processes may continue. 

Another benefit of this interpretation is that we can use the $destroy$ message as an explicit request to terminate the process.  Thus, it is possible to destroy a lifeline, indicating that the variable has gone out of scope, without terminating the process. 
Or, to terminate the process but still send messages to the variable, for example, to query the process' termination status.

\begin{figure}
    \begin{center}
\definecolor{plantucolor0000}{RGB}{0,0,0}
\definecolor{plantucolor0001}{RGB}{24,24,24}
\definecolor{plantucolor0002}{RGB}{255,255,255}
\definecolor{plantucolor0003}{RGB}{238,238,238}
\begin{tikzpicture}[yscale=-1
,pstyle0/.style={color=black,line width=1.5pt}
,pstyle1/.style={color=plantucolor0001,line width=0.5pt,dash pattern=on 5.0pt off 5.0pt}
,pstyle2/.style={color=plantucolor0002,line width=0.5pt}
,pstyle4/.style={color=plantucolor0001,line width=1.0pt}
]
\draw[pstyle0] (10pt,54.7461pt) rectangle (239.3581pt,176.9277pt);
\draw[pstyle1] (31pt,37.7461pt) -- (31pt,224.4063pt);
\draw[pstyle1] (131.182pt,37.7461pt) -- (131.182pt,224.4063pt);
\draw[pstyle1] (217.7581pt,135.0762pt) -- (217.7581pt,176pt);
\draw[pstyle2] (20pt,10pt) arc (180:270:5pt) -- (25pt,5pt) -- (38.9111pt,5pt) arc (270:360:5pt) -- (43.9111pt,10pt) -- (43.9111pt,31.7461pt) arc (0:90:5pt) -- (38.9111pt,36.7461pt) -- (25pt,36.7461pt) arc (90:180:5pt) -- (20pt,31.7461pt) -- cycle;
\node at (27pt,12pt)[below right,color=black]{C};
\draw[pstyle2] (20pt,228.4063pt) arc (180:270:5pt) -- (25pt,223.4063pt) -- (38.9111pt,223.4063pt) arc (270:360:5pt) -- (43.9111pt,228.4063pt) -- (43.9111pt,250.1523pt) arc (0:90:5pt) -- (38.9111pt,255.1523pt) -- (25pt,255.1523pt) arc (90:180:5pt) -- (20pt,250.1523pt) -- cycle;
\node at (27pt,230.4063pt)[below right,color=black]{C};
\draw[pstyle2] (120.182pt,10pt) arc (180:270:5pt) -- (125.182pt,5pt) -- (138.282pt,5pt) arc (270:360:5pt) -- (143.282pt,10pt) -- (143.282pt,31.7461pt) arc (0:90:5pt) -- (138.282pt,36.7461pt) -- (125.182pt,36.7461pt) arc (90:180:5pt) -- (120.182pt,31.7461pt) -- cycle;
\node at (127.182pt,12pt)[below right,color=black]{S};
\draw[pstyle2] (120.182pt,228.4063pt) arc (180:270:5pt) -- (125.182pt,223.4063pt) -- (138.282pt,223.4063pt) arc (270:360:5pt) -- (143.282pt,228.4063pt) -- (143.282pt,250.1523pt) arc (0:90:5pt) -- (138.282pt,255.1523pt) -- (125.182pt,255.1523pt) arc (90:180:5pt) -- (120.182pt,250.1523pt) -- cycle;
\node at (127.182pt,230.4063pt)[below right,color=black]{S};
\draw[pstyle2] (206.7581pt,228.4063pt) arc (180:270:5pt) -- (211.7581pt,223.4063pt) -- (224.3581pt,223.4063pt) arc (270:360:5pt) -- (229.3581pt,228.4063pt) -- (229.3581pt,250.1523pt) arc (0:90:5pt) -- (224.3581pt,255.1523pt) -- (211.7581pt,255.1523pt) arc (90:180:5pt) -- (206.7581pt,250.1523pt) -- cycle;
\draw[color=black,fill=plantucolor0003,line width=1.5pt] (10pt,54.7461pt) -- (81.4pt,54.7461pt) -- (81.4pt,63.2246pt) -- (71.4pt,73.2246pt) -- (10pt,73.2246pt) -- (10pt,54.7461pt);
\draw[pstyle0] (10pt,54.7461pt) rectangle (239.3581pt,176.9277pt);
\node at (25pt,55.7461pt)[below right,color=black]{\textbf{loop}};
\draw[pstyle4] (129.732pt,95.7031pt) -- (119.732pt,91.7031pt);
\draw[pstyle4] (129.732pt,95.7031pt) -- (119.732pt,99.7031pt);
\draw[pstyle4] (31.9556pt,95.7031pt) -- (130.732pt,95.7031pt);
\node at (38.9556pt,77.2246pt)[below right,color=black]{schedule job};
\draw[pstyle4] (204.7581pt,126.1816pt) -- (194.7581pt,122.1816pt);
\draw[pstyle4] (204.7581pt,126.1816pt) -- (194.7581pt,130.1816pt);
\draw[pstyle4] (131.732pt,126.1816pt) -- (205.7581pt,126.1816pt);
\node at (138.732pt,107.7031pt)[below right,color=black]{\guillemotleft create\guillemotright };
\draw[pstyle2] (206.7581pt,108.7031pt) arc (180:270:5pt) -- (211.7581pt,103.7031pt) -- (224.3581pt,103.7031pt) arc (270:360:5pt) -- (229.3581pt,108.7031pt) -- (229.3581pt,130.4492pt) arc (0:90:5pt) -- (224.3581pt,135.4492pt) -- (211.7581pt,135.4492pt) arc (90:180:5pt) -- (206.7581pt,130.4492pt) -- cycle;
\node at (213.7581pt,110.7031pt)[below right,color=black]{P};
\draw[pstyle4] (216.0581pt,168.9277pt) -- (206.0581pt,164.9277pt);
\draw[pstyle4] (216.0581pt,168.9277pt) -- (206.0581pt,172.9277pt);
\draw[pstyle4] (131.732pt,168.9277pt) -- (217.0581pt,168.9277pt);
\node at (138.732pt,150.4492pt)[below right,color=black]{run job};
\end{tikzpicture}
    \end{center}
    \caption{The lifeline models a variable and the loop's body is its scope. The $create$ message creates a new process and a local variable $P$ to refer to that process. Each iteration uses the same variable to reference a newly created process. The variable $P$ (and hence the lifeline it denotes) is local to the loop's body. }
    \label{fig:servergood}
\end{figure}
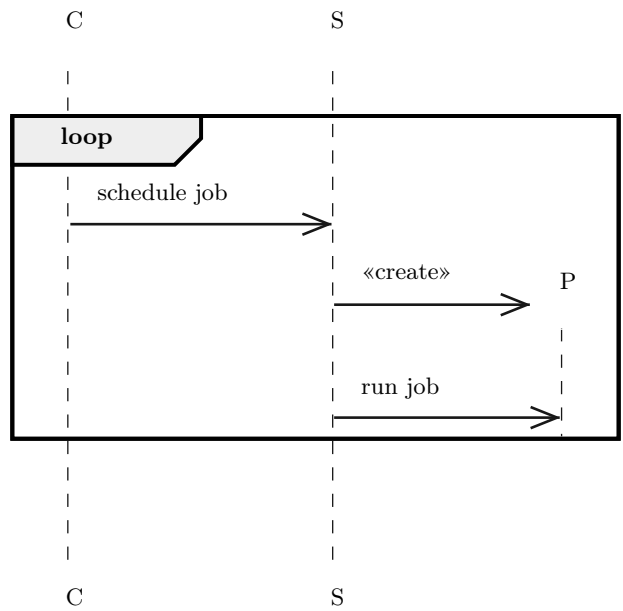

The semantics of a $create$ or a $destroy$ message does not affect the set of traces, it only affects the set of lifeline names that are valid immediately after these messages occur:
\begin{zed}
    N(create(n), ns) =  ns \cup \{ n \} ) \also
    N(destroy(n), ns) = ns \setminus \{ n\} \also
    D(create(n), ns) = \{ \nil \} \also
    D(destroy(n), ns) = \{ \nil \} 
\end{zed}

This definition makes it explicit that creating or destroying a lifeline is not a communication event, it only affects the namespace in which we calculate the rest of the trace.

\section{Related work}
\label{related}

As there is a lot of work on the semantics of sequence diagrams, we will not delve into the details of each work.
Instead, we will outline the major directions, indicating where our work differs and highlighting our contribution. 
A detailed review of existing work is available in~\cite{MAHE24}.

There are two major approaches to defining the semantics of sequence diagrams. The first approach, motivated mainly by the desire to apply formal verification techniques, is to translate the diagrams into an existing formal notation, for example, into a process algebra. This immediately provides a well-defined semantics. However, it is often difficult to translate all the features (for example, nested fragments are difficult to model in CSP). As a result, such work often leaves out important features. 
In addition, explaining the semantics to engineers requires them to master the formal notation and the translation process. 
This makes it more difficult to explain and hinders the acceptance of the semantics for many practitioners.

Many earlier works belonging to this category are reviewed in~\cite{MicskeiW11}. 
Later works along these lines are, for example, Jacobs and Simpson~\cite{jacobs2015process} that use CSP as the underlying semantics. However, they do not cover object creation/destruction or fragment nesting. 

In~\cite{LimaIS14} the authors translate sequence diagrams into the COMPASS modeling language. However, the translation is defined informally, there is no explicit semantics, and the paper does not consider weak sequential composition or object creation/deletion. 

The authors of~\cite{ChenML20}, translate sequence diagrams into Clock Constraint Specification Language (CCSL) constraints.
The translation is given in terms of examples, and sequence diagrams are not given explicit semantics, making it difficult to assess the meaning of the results. 

In~\cite{MouakherDA22} sequence diagrams are translated to an Event-B machine. Its drawbacks are that there is no clear mapping between the sequence diagram notation and the formal structure, and the semantics itself is much more complicated than a simple set of traces. This makes it difficult to understand and creates a higher barrier to adoption by practitioners.

Finally,~\cite{HamroucheCM22} translates sequence diagrams using graph transformations to CSP. The CSP code is analyzed using the FDR4 model checker. However, the paper does not provide explicit semantics, instead, it describes a large (but partial) set of graph transformation rules that translate from one syntax to another. 
It is difficult to understand what is the semantics that the authors had in mind or ascertain if the transformation matches the desired semantics.

In summary, the first approach is based on a translation process that generates a structure in an existing formal notation. 
The semantics of sequence diagrams is thus implicitly given as the semantics of the generated structure. 
We agree with~\cite{MicskeiW11} that while translation to an existing formal notation benefits from the tools that support the notation, it makes the approach cumbersome in practice because the translation is complicated and requires mastery of mathematical concepts that are daunting to most engineers.


The second approach is to define the semantics directly on the sequence diagram notation. This approach is appealing because it maps directly to the structure of the diagrams and is based on a set of traces, a concept that is easy to understand and is closer to the informal semantics described in the UML standard. 

For example,~\cite{LuK14} defines an abstract syntax for sequence diagrams with trace semantics that associates each AST node with a set of traces. The work focuses on defining refinement relations between sequence diagrams. It does not address the difference between synchronous and asynchronous messages, but the semantics appear to be suitable only for asynchronous messages, as there are no rules to enforce the restrictions of synchronous messages. 
In addition, the work does not consider the creation or destruction of lifelines. 
Finally, their semantic rules are much more complicated compared to our work. 

The authors of~\cite{DhaouMAB17} extend previous work that focuses on defining a partial order between events in sequence diagrams. However, they do not provide explicit rules for calculating the traces of a sequence diagram. 

In~\cite{HagaMC21} the authors use an abstract syntax tree to define the structure of sequence diagrams, and a state machine formalism to specify their operational semantics. Their semantics considers weak composition only for a sequence of primitive messages. Their work does not provide a denotational semantics. Nor do they treat object creation/destruction. 

Finally, in~\cite{MAHE24} the authors provide both a denotational and an operational semantics for sequence diagrams, and prove their equivalence. Compared to this work, we provide a simpler semantics at the price of treating messages as single atomic events instead of two separate send and receive events. In addition, we define the semantics for creating and destroying lifelines, and illustrate how it interacts withe meaning of combined fragments, a topic that is missing from prior studies on the semantics of sequence diagrams. 

\section{Discussion}
\label{discusssion}

\subsection{Modeling messages}

The UML standard separates sending and receiving messages into different events. Accordingly, most works on formalizing the semantics treat sending and receiving messages as separate events. This makes it possible to model typical scenarios in distributed systems where messages are lost or received out of order. 
However, the price we pay for this flexibility is a more complicated semantics and a much larger set of possible traces. 
In our work, in contrast, we treat sending and receiving a message as a single atomic event.
It may seem that we simplify the semantics at the price of losing the ability to model important phenomena. 
But we can use a simple trick to model all that was apparently lost, by introducing a lifeline to represent the message medium. 
A lost message is a message that is sent to the medium but not forwarded to its destination.
A delay is achieved by emitting the message from the medium after the required delay duration. 
We may also use the medium to model duplicating messages or sending messages out of order. See Fig.~\ref{fig:medium}.

\begin{figure}
    \begin{center}
\definecolor{plantucolor0000}{RGB}{24,24,24}
\definecolor{plantucolor0001}{RGB}{255,255,255}
\definecolor{plantucolor0002}{RGB}{0,0,0}
\begin{tikzpicture}[yscale=-1
,pstyle0/.style={color=plantucolor0000,line width=0.5pt,dash pattern=on 5.0pt off 5.0pt}
,pstyle1/.style={color=plantucolor0001,line width=0.5pt}
,pstyle2/.style={color=plantucolor0000,line width=1.0pt}
]
\draw[pstyle0] (17pt,37.7461pt) -- (17pt,210.1387pt);
\draw[pstyle0] (88.77pt,37.7461pt) -- (88.77pt,210.1387pt);
\draw[pstyle0] (176.7677pt,37.7461pt) -- (176.7677pt,210.1387pt);
\draw[pstyle1] (5pt,10pt) arc (180:270:5pt) -- (10pt,5pt) -- (25.2444pt,5pt) arc (270:360:5pt) -- (30.2444pt,10pt) -- (30.2444pt,31.7461pt) arc (0:90:5pt) -- (25.2444pt,36.7461pt) -- (10pt,36.7461pt) arc (90:180:5pt) -- (5pt,31.7461pt) -- cycle;
\node at (12pt,12pt)[below right,color=black]{A};
\draw[pstyle1] (5pt,214.1387pt) arc (180:270:5pt) -- (10pt,209.1387pt) -- (25.2444pt,209.1387pt) arc (270:360:5pt) -- (30.2444pt,214.1387pt) -- (30.2444pt,235.8848pt) arc (0:90:5pt) -- (25.2444pt,240.8848pt) -- (10pt,240.8848pt) arc (90:180:5pt) -- (5pt,235.8848pt) -- cycle;
\node at (12pt,216.1387pt)[below right,color=black]{A};
\draw[pstyle1] (54.77pt,10pt) arc (180:270:5pt) -- (59.77pt,5pt) -- (118.2744pt,5pt) arc (270:360:5pt) -- (123.2744pt,10pt) -- (123.2744pt,31.7461pt) arc (0:90:5pt) -- (118.2744pt,36.7461pt) -- (59.77pt,36.7461pt) arc (90:180:5pt) -- (54.77pt,31.7461pt) -- cycle;
\node at (61.77pt,12pt)[below right,color=black]{Medium};
\draw[pstyle1] (54.77pt,214.1387pt) arc (180:270:5pt) -- (59.77pt,209.1387pt) -- (118.2744pt,209.1387pt) arc (270:360:5pt) -- (123.2744pt,214.1387pt) -- (123.2744pt,235.8848pt) arc (0:90:5pt) -- (118.2744pt,240.8848pt) -- (59.77pt,240.8848pt) arc (90:180:5pt) -- (54.77pt,235.8848pt) -- cycle;
\node at (61.77pt,216.1387pt)[below right,color=black]{Medium};
\draw[pstyle1] (165.7677pt,10pt) arc (180:270:5pt) -- (170.7677pt,5pt) -- (184.3677pt,5pt) arc (270:360:5pt) -- (189.3677pt,10pt) -- (189.3677pt,31.7461pt) arc (0:90:5pt) -- (184.3677pt,36.7461pt) -- (170.7677pt,36.7461pt) arc (90:180:5pt) -- (165.7677pt,31.7461pt) -- cycle;
\node at (172.7677pt,12pt)[below right,color=black]{B};
\draw[pstyle1] (165.7677pt,214.1387pt) arc (180:270:5pt) -- (170.7677pt,209.1387pt) -- (184.3677pt,209.1387pt) arc (270:360:5pt) -- (189.3677pt,214.1387pt) -- (189.3677pt,235.8848pt) arc (0:90:5pt) -- (184.3677pt,240.8848pt) -- (170.7677pt,240.8848pt) arc (90:180:5pt) -- (165.7677pt,235.8848pt) -- cycle;
\node at (172.7677pt,216.1387pt)[below right,color=black]{B};
\draw[pstyle2] (87.0222pt,70.2246pt) -- (77.0222pt,66.2246pt);
\draw[pstyle2] (87.0222pt,70.2246pt) -- (77.0222pt,74.2246pt);
\draw[pstyle2] (17.6222pt,70.2246pt) -- (88.0222pt,70.2246pt);
\node at (24.6222pt,51.7461pt)[below right,color=black]{m1 to B};
\draw[pstyle2] (87.0222pt,100.7031pt) -- (77.0222pt,96.7031pt);
\draw[pstyle2] (87.0222pt,100.7031pt) -- (77.0222pt,104.7031pt);
\draw[pstyle2] (17.6222pt,100.7031pt) -- (88.0222pt,100.7031pt);
\node at (24.6222pt,82.2246pt)[below right,color=black]{m2 to B};
\draw[pstyle2] (87.0222pt,131.1816pt) -- (77.0222pt,127.1816pt);
\draw[pstyle2] (87.0222pt,131.1816pt) -- (77.0222pt,135.1816pt);
\draw[pstyle2] (17.6222pt,131.1816pt) -- (88.0222pt,131.1816pt);
\node at (24.6222pt,112.7031pt)[below right,color=black]{m3 to B};
\draw[pstyle2] (175.5677pt,161.6602pt) -- (165.5677pt,157.6602pt);
\draw[pstyle2] (175.5677pt,161.6602pt) -- (165.5677pt,165.6602pt);
\draw[pstyle2] (89.0222pt,161.6602pt) -- (176.5677pt,161.6602pt);
\node at (96.0222pt,143.1816pt)[below right,color=black]{m3 from A};
\draw[pstyle2] (175.5677pt,192.1387pt) -- (165.5677pt,188.1387pt);
\draw[pstyle2] (175.5677pt,192.1387pt) -- (165.5677pt,196.1387pt);
\draw[pstyle2] (89.0222pt,192.1387pt) -- (176.5677pt,192.1387pt);
\node at (96.0222pt,173.6602pt)[below right,color=black]{m2 from A};
\end{tikzpicture}
    \end{center}
    \caption{A medium lifeline may be used to model messages that get lost or arrive out of order. Here $A$ sends three messages to $B$. The first one is lost and the other two arrive out of order.}
    \label{fig:medium}
\end{figure}
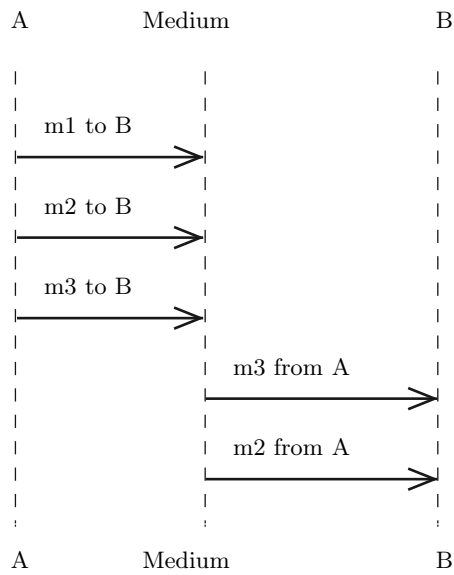

\subsection{Synchronous interactions}

Synchronous interactions place several restrictions on the patterns of message exchange. First, every synchronous call must be accompanied by a reply message. Second, the structure of the calls must honor the stack-based activation semantics. These restrictions complicate the semantics, in particular in the case of the parallel fragment. 

We may model synchronous diagrams using our asynchronous semantics by excluding any traces that violate these restrictions. However, such restrictions interfere with our desire to keep the semantics compositional. Thus, a proper development of these ideas is too large to fit in this work. 

\subsection{Excluding guards and conditions}

We have decided to exclude guard and condition expressions from alts and loops. This makes the semantics much easier to describe, analyze, reason about and in general understand, and this outweighs the disadvantages of not representing them in the semantics because it provides a sound understanding of what the diagrams mean, and it even makes it easier to understand how the guards affect the meaning of the unguarded operators. 
In particular, the guards act as filters that restrict the set of traces, thus the unguarded versions are an over-approximation of the guarded versions. When reasoning informally we can understand what traces they exclude. When reasoning formally, for example when we use a tool, we may implement a more elaborate semantics that includes support for such expressions.

We may make an analogy with the difference between the theory of regular expressions and their use in practice, which adds many kinds of bells and whistles that are important for practical use but would considerably complicate the formal semantics if introduced into the theory (for example anchors, named subpatterns, and even context aware constructs like backreferences, that in effect make the language more powerful than a regular language). 

\subsection{Messages to self}
\label{self}

The standard syntax to represent a message from a lifeline to itself is shown in Fig.~\ref{fig:asyncself}. 
But there is a problem with this notation. We use horizontal lines to draw messages between different lifelines. The horizontal line means that sending a message is an instantaneous event. 
Yet, for self-messages, the line is not horizontal. Does it mean that self-messages are not instantaneous? A more consistent notation is to draw self-messages as dots on the horizontal line. See Fig.~\ref{fig:asyncself2}.

\begin{figure}
    \begin{center}
\definecolor{plantucolor0000}{RGB}{24,24,24}
\definecolor{plantucolor0001}{RGB}{255,255,255}
\definecolor{plantucolor0002}{RGB}{0,0,0}
\begin{tikzpicture}[yscale=-1
,pstyle0/.style={color=plantucolor0000,line width=0.5pt,dash pattern=on 5.0pt off 5.0pt}
,pstyle1/.style={color=plantucolor0001,line width=0.5pt}
,pstyle2/.style={color=plantucolor0000,line width=1.0pt}
]
\draw[pstyle0] (17pt,37.7461pt) -- (17pt,131.7031pt);
\draw[pstyle0] (61.8222pt,37.7461pt) -- (61.8222pt,131.7031pt);
\draw[pstyle1] (5pt,10pt) arc (180:270:5pt) -- (10pt,5pt) -- (25.2444pt,5pt) arc (270:360:5pt) -- (30.2444pt,10pt) -- (30.2444pt,31.7461pt) arc (0:90:5pt) -- (25.2444pt,36.7461pt) -- (10pt,36.7461pt) arc (90:180:5pt) -- (5pt,31.7461pt) -- cycle;
\node at (12pt,12pt)[below right,color=black]{A};
\draw[pstyle1] (5pt,135.7031pt) arc (180:270:5pt) -- (10pt,130.7031pt) -- (25.2444pt,130.7031pt) arc (270:360:5pt) -- (30.2444pt,135.7031pt) -- (30.2444pt,157.4492pt) arc (0:90:5pt) -- (25.2444pt,162.4492pt) -- (10pt,162.4492pt) arc (90:180:5pt) -- (5pt,157.4492pt) -- cycle;
\node at (12pt,137.7031pt)[below right,color=black]{A};
\draw[pstyle1] (50.8222pt,10pt) arc (180:270:5pt) -- (55.8222pt,5pt) -- (69.4222pt,5pt) arc (270:360:5pt) -- (74.4222pt,10pt) -- (74.4222pt,31.7461pt) arc (0:90:5pt) -- (69.4222pt,36.7461pt) -- (55.8222pt,36.7461pt) arc (90:180:5pt) -- (50.8222pt,31.7461pt) -- cycle;
\node at (57.8222pt,12pt)[below right,color=black]{B};
\draw[pstyle1] (50.8222pt,135.7031pt) arc (180:270:5pt) -- (55.8222pt,130.7031pt) -- (69.4222pt,130.7031pt) arc (270:360:5pt) -- (74.4222pt,135.7031pt) -- (74.4222pt,157.4492pt) arc (0:90:5pt) -- (69.4222pt,162.4492pt) -- (55.8222pt,162.4492pt) arc (90:180:5pt) -- (50.8222pt,157.4492pt) -- cycle;
\node at (57.8222pt,137.7031pt)[below right,color=black]{B};
\draw[pstyle2] (60.6222pt,70.2246pt) -- (50.6222pt,66.2246pt);
\draw[pstyle2] (60.6222pt,70.2246pt) -- (50.6222pt,74.2246pt);
\draw[pstyle2] (17.6222pt,70.2246pt) -- (61.6222pt,70.2246pt);
\node at (24.6222pt,51.7461pt)[below right,color=black]{m1};
\draw[pstyle2] (17.6222pt,100.7031pt) -- (59.6222pt,100.7031pt);
\draw[pstyle2] (59.6222pt,100.7031pt) -- (59.6222pt,113.7031pt);
\draw[pstyle2] (18.6222pt,113.7031pt) -- (59.6222pt,113.7031pt);
\draw[pstyle2] (18.6222pt,113.7031pt) -- (28.6222pt,109.7031pt);
\draw[pstyle2] (18.6222pt,113.7031pt) -- (28.6222pt,117.7031pt);
\node at (24.6222pt,82.2246pt)[below right,color=black]{m2};
\end{tikzpicture}
    \end{center}
    \caption{$A$ sends message $m_2$ to itself. The common notation to indicate messages sent to self conveys the false impression that sending a message to self is not instantaneous.}
    \label{fig:asyncself}
\end{figure}
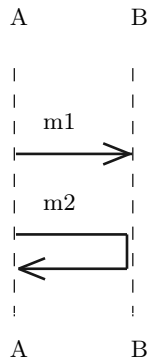

\begin{figure}
    \begin{center}
\definecolor{plantucolor0000}{RGB}{24,24,24}
\definecolor{plantucolor0001}{RGB}{255,255,255}
\definecolor{plantucolor0002}{RGB}{0,0,0}
\begin{tikzpicture}[yscale=-1
,pstyle0/.style={color=plantucolor0000,line width=0.5pt,dash pattern=on 5.0pt off 5.0pt}
,pstyle1/.style={color=plantucolor0001,line width=0.5pt}
,pstyle2/.style={color=plantucolor0000,line width=1.0pt}
]
\draw[pstyle0] (17pt,37.7461pt) -- (17pt,131.7031pt);
\draw[pstyle0] (61.8222pt,37.7461pt) -- (61.8222pt,131.7031pt);
\draw[pstyle1] (5pt,10pt) arc (180:270:5pt) -- (10pt,5pt) -- (25.2444pt,5pt) arc (270:360:5pt) -- (30.2444pt,10pt) -- (30.2444pt,31.7461pt) arc (0:90:5pt) -- (25.2444pt,36.7461pt) -- (10pt,36.7461pt) arc (90:180:5pt) -- (5pt,31.7461pt) -- cycle;
\node at (12pt,12pt)[below right,color=black]{A};
\draw[pstyle1] (5pt,135.7031pt) arc (180:270:5pt) -- (10pt,130.7031pt) -- (25.2444pt,130.7031pt) arc (270:360:5pt) -- (30.2444pt,135.7031pt) -- (30.2444pt,157.4492pt) arc (0:90:5pt) -- (25.2444pt,162.4492pt) -- (10pt,162.4492pt) arc (90:180:5pt) -- (5pt,157.4492pt) -- cycle;
\node at (12pt,137.7031pt)[below right,color=black]{A};
\draw[pstyle1] (50.8222pt,10pt) arc (180:270:5pt) -- (55.8222pt,5pt) -- (69.4222pt,5pt) arc (270:360:5pt) -- (74.4222pt,10pt) -- (74.4222pt,31.7461pt) arc (0:90:5pt) -- (69.4222pt,36.7461pt) -- (55.8222pt,36.7461pt) arc (90:180:5pt) -- (50.8222pt,31.7461pt) -- cycle;
\node at (57.8222pt,12pt)[below right,color=black]{B};
\draw[pstyle1] (50.8222pt,135.7031pt) arc (180:270:5pt) -- (55.8222pt,130.7031pt) -- (69.4222pt,130.7031pt) arc (270:360:5pt) -- (74.4222pt,135.7031pt) -- (74.4222pt,157.4492pt) arc (0:90:5pt) -- (69.4222pt,162.4492pt) -- (55.8222pt,162.4492pt) arc (90:180:5pt) -- (50.8222pt,157.4492pt) -- cycle;
\node at (57.8222pt,137.7031pt)[below right,color=black]{B};
\draw[pstyle2] (60.6222pt,70.2246pt) -- (50.6222pt,66.2246pt);
\draw[pstyle2] (60.6222pt,70.2246pt) -- (50.6222pt,74.2246pt);
\draw[pstyle2] (17.6222pt,70.2246pt) -- (61.6222pt,70.2246pt);
\node at (24.6222pt,51.7461pt)[below right,color=black]{m1};

\draw[black,fill=black] (17.6222pt,100.7031pt) circle (.5ex);
\node at (24.6222pt,82.2246pt)[below right,color=black]{m2};
\end{tikzpicture}
    \end{center}
    \caption{$A$ sends message $m_2$ to itself. Using a dot to draw a message to self is consistent with an instantaneous interpretation.}
    \label{fig:asyncself2}
\end{figure}
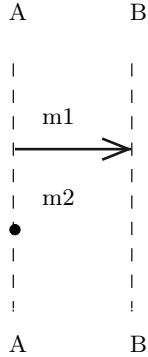

\subsection{Other combined fragments}

The UML provides additional combined fragments that are less widely used, and therefore we have not included them in the main body of our semantics. 
In this section we briefly discuss each one, either suggesting how it may be incorporated into the semantics or why we do not or cannot model it with our semantics. 

\subsubsection{Critical}

The $critical$ fragment protects the traces in its body from being interleaved. We can model this behavior by adding a transaction data structure to our messages. That is, a message is either a basic (single arrow) event or a transaction, which is a sequence of messages. The meaning of a critical fragment is then to wrap its messages inside a transaction. We did not add this to the semantics because we feel that the infrequent use of $critical$ does not justify this complication.

\subsubsection{Assert and Negate}

The stated purpose of $assert$ and $negate$ is to specify required and forbidden traces. 
However, as noted in~\cite{HarelM08}, this purpose cannot be achieved using a simple trace semantics. 
Indeed, the entire question of how to interpret the set of traces described by a sequence diagram with respect to a system, is beyond the scope of the diagram's semantics. 
Let us denote the set of traces described by the diagram as $D$ and the set of traces exhibited by a system as $S$. We may use a sequence diagram to specify required behaviors, in that case each trace described by the diagram must be a possible valid system behavior. 
That is, $D \subseteq S$. Alternatively, we may use the diagram as a specification of all possible behaviors. 
In that case, each system trace must be a member of the diagram's traces. That is $S \subseteq D$. 
Or we may use the diagram to specify forbidden behaviors, and in this case no trace of the diagram must be a possible system behavior. 
That is, $D \cap S = \emptyset$. The point of this discussion is that there is no need for (and no way to define) operators such as $assert$ and $negate$ because their intended meaning lies in the way we use the diagram in relation to a system, and not in how we determine the set of traces defined by the diagram. 

\subsubsection{Consider and Ignore}

Finally, the $consider$ and $ignore$ fragments explicitly filter the set of traces by specifying which messages to include or exclude from the set of traces defined by their bodies.
Thus, to add them to our semantics we could define

\begin{zed}
    D(consider(ms, b)) = \{ t : D(b) @ filter(ms,t) \}
\also  
    D(ignore(ms, b)) = \{ t : D(b) @ filter(\Sigma\setminus ms,t) \}
\end{zed}
where $filter(ms,t)$ removes from a trace $t$ all messages that are not members of the set $ms$, and $\Sigma$ is the set of all messages.

\section{Summary and future work}
\label{summary}
We have described a simple denotational trace semantics for the most commonly used fragments of asynchronous sequence diagrams. 
In contrast with other work, our semantics explicitly considers---and therefore clarifies, the creation and destruction of lifelines. We have shown that even though our design choices simplify the semantics by treating a message exchange as an atomic event, we do not lose expressive power. 

This work sets the basis for two important future developments. First, as we have hinted in the discussion, we plan to use the semantics as the basis for specifying the synchronous-diagrams semantics. Second, we intend to develop an operational semantics to accompany our denotational semantics and use it to provide tool support for reasoning about sequence diagrams. 
Finally, we plan to use the semantics that we have described in this paper to develop a precise notion of abstraction/refinement relations between sequence diagrams.

\bibliographystyle{plain}  
\bibliography{main}

\section*{Appendix}

\subsection{Summary of semantics}

We use two functions to define the semantics. The function $N$ calculates each fragment's namespace; the function $D$ calculates each fragment's traces, it uses $N$ to check that the peers of each message exist in the fragment's namespace.

\begin{zed}
    N(create(n), ns) =  ns \cup \{ n \} 
    \also
    N(destroy(n), ns) = ns \setminus \{ n\} 
    \also
    N(weakseq(x,y), ns) = N(y, N(x, ns))
    \also
    N(x, ns) = ns
 \end{zed}

\begin{zed}
    D(create(n), ns) =  \{ \nil \}
    \also
    D(destroy(n), ns) = \{ \nil \}
    \also
    D(basic(m_1,\ldots, m_n), ns) = weak~\trace{m_1,\ldots,m_n} \qquad \hbox{provided that $peers(m) \in ns$.}
    \also
    D(weakseq(x, y), ns) = weak~D(x, ns) \cat D(y, N(weakseq(x, y), ns))
    \also
    D(alt(x, y), ns) = D(x, ns) \cup D(y, ns)
    \also
    D(par(x, y), ns) = interleave~(D(x, ns), D(y, ns))
    \also
    D(loop(b), ns) = \bigcup \{ t : D(b, ns)^* @ weak(t) \}
\end{zed}

\subsection{Weak sequential composition}

The weak sequential composition function calculates a set of traces from a sequence of messages. 
It keeps the sequential order between messages that share a common lifeline. Messages that do not share a common lifeline interleave.

\begin{zed}
    weak~\nil = \nil \\
    weak~\trace{x} = \trace{x} \\
    weak~\trace{x,y}\cat t = x \cat weak~(\trace{y}\cat t) \qquad \hbox{if $x$ and $y$ share a common lifeline}\\
    weak~\trace{x,y}\cat t = x \cat weak~(\trace{y}\cat t) \cup y \cat weak~(\trace{x} \cat t) \qquad {\hbox{otherwise} }\\
\end{zed}

The concatenation operator is overloaded, when applied to a single message and a set of traces it concatenates the message to each trace in the set.

\subsection{Interleaving}

To interleave two sets of traces, we interleave each trace in the first set with every trace in the second set.

\begin{zed}
    interleave~xs~ys = \bigcup \{ x : xs , y : ys @ x \interleave y \}
\end{zed}

To interleave two traces, we embed one trace in the other while keeping their original order unchanged.

\begin{zed}
\nil \interleave t = t \\
t \interleave \nil = t \\
x \interleave y = \{ t : \Tail x \interleave y @ \trace{\Head x}\cat t \} \union \{ t : x \interleave \Tail y @ \trace{\Head y} \cat t \}
\end{zed}

\subsection{Theorems}

\begin{theorem}
$$D(loop(b)) = D(alt(skip, weakseq(b), weakseq(b,b), weakseq(b,b,b), \ldots))$$ 
\end{theorem}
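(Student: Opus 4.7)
The plan is to unfold both sides into a common form, namely a union indexed by the number of iterations, and then reduce the theorem to a single associativity-style lemma about $weak$.

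First I would introduce the shorthand $W(S) = \bigcup\{t : S @ weak(t)\}$ for a set of traces $S$, so that $D(loop(b)) = W(D(b)^*)$ and $D(weakseq(f_1,f_2)) = W(D(f_1)\cat D(f_2))$. Since $D(b)^* = \bigcup_{n\geq 0} D(b)^n$ and $W$ clearly distributes over arbitrary unions (immediate from its set-builder definition), the left-hand side rewrites as
\begin{zed}
D(loop(b)) = \bigcup_{n \geq 0} W(D(b)^n).
\end{zed}
On the right-hand side, the semantics of $alt$ already gives a union, and reading $weakseq(b)$ as $b$ itself and $skip$ as the empty-trace fragment, we obtain
\begin{zed}
D(alt(skip, weakseq(b), weakseq(b,b), \ldots)) = \{\nil\} \cup D(b) \cup D(weakseq(b,b)) \cup \ldots
\end{zed}

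Next, I would prove by induction on $n$ that $D(weakseq(b,\ldots,b)) = W(D(b)^n)$, where the left-hand side denotes iterated weak sequential composition of $n$ copies of $b$. The cases $n=0$ and $n=1$ are immediate. The inductive step reduces, via the definition of $weakseq$, to the identity
\begin{zed}
W(W(D(b)^n) \cat D(b)) = W(D(b)^{n+1}),
\end{zed}
so the whole theorem collapses to the single idempotence-style lemma
\begin{zed}
W(W(S) \cat T) = W(S \cat T)
\end{zed}
for arbitrary sets of traces $S, T$.

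To prove that lemma I would first establish the characterization underlying $weak$: a trace $u$ lies in $weak(v)$ precisely when, for every lifeline $\ell$, the subsequence of $u$ restricted to messages incident on $\ell$ equals the corresponding subsequence of $v$. This follows by a routine induction on the length of $v$ against the recursive clauses of $weak$ given in the appendix. Once this characterization is in hand, membership in $weak(\cdot)$ is easily seen to be a transitive relation on traces, which immediately yields both inclusions of the lemma: the $\supseteq$ direction uses $s \in weak(s)$, and the $\subseteq$ direction composes two applications of the characterization into one.

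The main obstacle I anticipate is precisely this per-lifeline characterization of $weak$ and the associated transitivity argument; everything else is bookkeeping. Once it is settled, the equality of the two indexed unions above is immediate, completing the proof.
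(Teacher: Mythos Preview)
Your high–level plan coincides with the paper's: both sides are unfolded to the common form $\bigcup_{n} \{\, t : D(b)^n @ weak(t)\,\}$. The paper's proof is much terser than yours; it simply writes $D(weakseq(\overbrace{b,\ldots,b}^{n})) = \bigcup\{t : D(b)^n @ weak(t)\}$ and justifies this step ``by definition of $D(weakseq)$'', i.e.\ it treats the $n$-ary $weakseq$ as if it were primitively defined by $W(D(f_1)\cat\cdots\cat D(f_n))$. You instead read $n$-ary $weakseq$ as iterated binary $weakseq$ and correctly observe that this reading requires the idempotence-style lemma $W(W(S)\cat T)=W(S\cat T)$. So you have actually isolated the step that the paper glosses over.

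There is, however, a real gap in your proposed proof of that lemma. The per-lifeline characterization you state --- $u \in weak(v)$ iff the restriction of $u$ to every lifeline equals that of $v$ --- does \emph{not} match the recursive definition of $weak$ given in the appendix. That definition only compares the first two elements and, when they are independent, allows them to swap; it never pulls a later independent message past both of them in one step. Concretely, take three messages $m_1,m_2,m_3$ on pairwise disjoint lifelines. Then $weak\,\langle m_1,m_2,m_3\rangle$ as defined contains only the four permutations beginning with $m_1$ or $m_2$, whereas your characterization would predict all six. Worse, the lemma itself fails under that literal definition: $\langle m_2,m_3,m_1\rangle \in weak\,\langle m_1,m_2,m_3\rangle$ and $\langle m_3,m_2,m_1\rangle \in weak\,\langle m_2,m_3,m_1\rangle$, so $\langle m_3,m_2,m_1\rangle \in W(W(\{\langle m_1,m_2,m_3\rangle\}))$ but $\langle m_3,m_2,m_1\rangle \notin W(\{\langle m_1,m_2,m_3\rangle\})$.

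In short: your decomposition is sound and in fact more scrupulous than the paper's, but your proof of the key lemma cannot be carried out against the appendix's $weak$ as written. Either one adopts the paper's reading (taking $D(weakseq(b,\ldots,b))=W(D(b)^n)$ as the \emph{definition} of the $n$-ary form, which makes the theorem immediate), or one first replaces the recursive $weak$ by the per-lifeline closure that the informal description in Section~\ref{weak} clearly intends --- after which your characterization, transitivity, and lemma all go through.
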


\begin{proof}
We prove the theorem by expanding each side of the equation until we reach the same expression on both sides. 
Starting with the left-hand side we argue,

\begin{argue}
    D(loop(b)) \\
     = \bigcup\{ t : D(b)^* @ weak(t) \} & by definition of $D(loop)$ \\
     = \bigcup \{ t : \bigcup\limits_{n} D(b)^n @ weak(t) \} & by definition of Kleene closure \\
     = \bigcup\limits_{n} \{ t : D(b)^n @ weak(t) \} & by associativity \& commutativity of union
\end{argue}

Expanding the right-hand side, 
\begin{argue}
    D(alt(skip, weakseq(b), weakseq(b,b), weakseq(b,b,b), \ldots)) \\
    = \bigcup\limits_{n} \{ D(weakseq(\overbrace{b, \ldots, b}^{n}) \} & definition of $D(alt)$ \\
    = \bigcup\limits_{n} \{ \bigcup\limits \{ t : D(b)^n @ weak(t) \} \} & definition of $D(weakseq)$ \\
    = \bigcup\limits_{n} \{ t : D(b)^n @ weak(t) \} & by associativity \& commutativity of union
\end{argue}
As both sides reduce to the same expression, they are equal. \qed
\end{proof}
\end{document}